\newtheorem{theorem}{Theorem}
\newtheorem{corollary}{Corollary}
\newtheorem{lemma}{Lemma}
\newtheorem{set-up}{Set-up}
\newtheorem*{theorem*}{Theorem}
\newtheorem*{corollary*}{Corollary}
\newtheorem*{lemma*}{Lemma}
\newtheorem*{observation*}{Observation}
\newtheorem*{proposition*}{Proposition}
\newtheorem*{claim*}{Claim}
\newtheorem*{fact*}{Fact}
\newtheorem*{assumption*}{Assumption}
\newtheorem*{assumptionA*}{Assumption A}
\newtheorem*{set-up*}{Set-up}
\theoremstyle{definition}
\newtheorem{definition}{Definition}
\newtheorem*{definition*}{Definition}
\newtheorem*{problem*}{Problem}
\newtheorem*{example*}{Example}
\renewenvironment{proof}[1][\proofname]{{\noindent\textbf{#1.}}}{\qed \vspace{\topsep}}
\newcolumntype{L}[1]{>{\raggedright\let\newline\\arraybackslash\hspace{0pt}}m{#1}}
\newcolumntype{C}[1]{>{\centering\let\newline\\arraybackslash\hspace{0pt}}m{#1}}
\newcolumntype{R}[1]{>{\raggedleft\let\newline\\arraybackslash\hspace{0pt}}m{#1}}
\DeclareMathOperator*{\argmax}{arg\,max}
\newcommand{\boldv}{\boldsymbol{v}}
\newcommand{\boldsigma}{\boldsymbol{\sigma}}
\newcommand{\boldc}{\boldsymbol{c}}
\newcommand{\boldq}{\boldsymbol{q}}
\newcommand{\boldw}{\boldsymbol{w}}
\newcommand{\boldmu}{\boldsymbol{\mu}}
\newcommand{\st}{\quad \text{subject to} \quad}
\begin{document}

\begin{titlepage}
\singlespacing
\title{Artificial Bugs for Crowdsearch\thanks{This research was partially supported by the Zurich Information Security and Privacy Center (ZISC). We thank Kari Kostiainen, Ueli Maurer, and workshop participants at ETH Zurich for valuable comments and discussions. All errors are our own.}}

\author{
	Hans Gersbach\\
	\normalsize KOF Swiss Economic Institute, \\
        \normalsize ETH Zurich, and CEPR\\ 
	\normalsize Leonhardstrasse 21\\
	\normalsize 8092 Zurich, Switzerland\\
	\normalsize \href{mailto:hgersbach@ethz.ch}{hgersbach@ethz.ch}
	\and
	Fikri Pitsuwan\\
	\normalsize KOF Swiss Economic Institute, \\
        \normalsize ETH Zurich\\ 
	\normalsize Leonhardstrasse 21\\
	\normalsize 8092 Zurich, Switzerland\\
	\normalsize \href{mailto:fpitsuwan@ethz.ch}{fpitsuwan@ethz.ch}
        \and
        Pio Blieske\\
        \normalsize KOF Swiss Economic Institute, \\
        \normalsize ETH Zurich\\ 
	\normalsize Leonhardstrasse 21\\
	\normalsize 8092 Zurich, Switzerland\\
	\normalsize \href{mailto:pblieske@ethz.ch}{pblieske@ethz.ch}
	}

\date{Last updated: \today}

\maketitle
\begin{abstract}
    \noindent Bug bounty programs, where external agents are invited to search and report vulnerabilities (bugs) in exchange for rewards (bounty), have become a major tool for companies to improve their systems. We suggest augmenting such programs by inserting artificial bugs to increase the incentives to search for real (organic) bugs. Using a model of crowdsearch, we identify the efficiency gains by artificial bugs, and we show that for this, it is sufficient to insert only one artificial bug.  Artificial bugs are particularly beneficial, for instance, if the designer places high valuations on finding organic bugs or if the budget for bounty is not sufficiently high. We discuss how to implement artificial bugs and outline their further benefits.
	\\
	\vspace{0in}\\
	\noindent\textbf{Keywords:}  Crowdsourcing, Bug Bounty, Cybersecurity \\
	\vspace{0in}\\
	\noindent\textbf{JEL Classification:} C72, D82, M52 \\
	\bigskip
\end{abstract}
\thispagestyle{empty}
\end{titlepage}

\pagebreak \newpage

\setcounter{page}{2}


\doublespacing


\section{Introduction}


How should we design a public intrusion test where external \textit{security researchers} were allowed to probe the software and report any vulnerabilities (bug) in exchange for rewards (bounty)? This type of program,  often called \textit{bug bounty} or \textit{crowdsourced security}, has become a major tool for detecting vulnerabilities in software used by governments, tech companies, and blockchains.\footnote{Their success in the past few years has led the Swiss authorities to systematically adopt bug bounty programs as a main measure in government cybersecurity. In a recent press release, the \citet{federaldepartmentoffinance2022} states that ``standardised security tests are no longer sufficient to uncover hidden loopholes. Therefore, in the future, it is intended that ethical hackers will search through the Federal Administration's productive IT systems and applications for vulnerabilities as part of so-called bug bounty programmes.''} Bug bounty is used by companies even if they have strong internal security teams (e.g. Google and Meta). It is also critical for blockchain infrastructure providers since such projects do not have dedicated security teams testing software upgrades. Once the software is deployed, there is no turning back and no legal mechanism defending against system exploitation, at least until the next hard fork\textemdash a major change in the blockchain protocol \citep{breidenbach2018,bohme2020}.

There have been comprehensive accounts on the rules of engagement of bug bounty programs \citep{laszka2018}, on the effectiveness and best practices of such programs \citep{walshe2020,malladi2020}, and on the incentives of researchers to participate in bug bounty programs~\citep{maillart2017}. For extensive literature on bug bounty programs, see \citet{bohme2006, zrahia2022,akgul2023}. 

In this paper, we suggest augmenting such programs by inserting artificial bugs to increase the incentives to search for real (organic) bugs. We start with a simple model of \textit{crowdsearch} (see, e.g., \citet{gersbach2023}), wherein researchers with different abilities decide on whether or not to exert costly effort to search for objects that are valuable to the contest designer. We then examine how inserting artificial bugs can be useful to the organization by lowering its financial commitment. We show that the ability to adjust the complexities of artificial bugs allows the organization to motivate more participants to search for organic bugs. We show that it is sufficient to insert one artificial bug to reap all possible efficiency gains. Moreover, we identify that an artificial bug is particularly beneficial if the designer has high valuations for finding organic bugs, if organic bugs are likely to exist, or if the designer's budget is low. Finally, we outline different engineering approaches to implement artificial bugs in practice and identify some other benefits of artificial bugs.

The paper is organized as follows: \Cref{sec:model} sets up the model. \Cref{sec:private} analyzes a private bug bounty program in which a finite number of agents search for bugs. \Cref{sec:public} considers a public program with a large crowd. Numerical examples are shown in \Cref{sec:example}. \Cref{sec:implement} outlines three approaches to implement artificial bugs and \Cref{sec:discuss} discusses other benefits of artificial bugs. \Cref{sec:conclusion} concludes. Proofs are relegated to the appendix.


\section{The Model} \label{sec:model}

There are potentially $L$ organic bugs in a system, indexed by $l$, and each exists with probability $\boldmu = (\mu^1,\dots,\mu^L) \in (0,1]^L$. We assume that the occurrence of bugs $l$ and bug $l'$ ($l \neq l'$) are stochastically independent, and neither the organization nor potential participants in a bug bounty program have any knowledge of such bugs. An organization (henceforth, \textit{designer}) values finding the bugs at $\boldw = (w^1,\dots,w^L) \in [0,\infty)^L$. The values of $w$ can be interpreted as utility or monetary values. The designer invites a set of risk-neutral agents $N$ to search for them in exchange for rewards $\boldv = (v^1,\dots,v^L) \in [0,\infty)^L$. If agent $i$ decides to search, s/he finds bug $l$ with probability $q^l \in (0,1]$, which represents its complexity, and receives a prize $v^l$ uniformly randomly rewarded to one of the agents who found it.\footnote{The assumption that all agents find a particular bug with the same probability is only for ease of exposition. The model can straightforwardly be extended to allow for heterogeneous bug-finding probability, with $q_i^l$ denoting agent $i$'s probability of finding bug $l$. This heterogeneity captures the idea that each security researcher has a particular skill set that may favor finding some bugs rather than others. \citet{maillart2017}'s empirical investigation of bug bounty programs supports this claim.} The events finding bug $l$ and finding bug $l'$ ($l \neq l'$) are stochastically independent. Searching, however, entails a fixed cost $c_i$ to agent $i \in N$, which is private information and drawn from a distribution $F$ with support on $[\underline{c},\overline{c}]$, where $-\infty \leq \underline{c} < \overline{c} \leq \infty$ and $\overline{c} > 0$.\footnote{The cost includes time spent reading and understanding the update, writing a program, designing an attack environment, and the opportunity cost of not participating in other bug bounty programs or other activities. We allow the possibility that the lower support $\underline{c}$ is negative to capture agents with intrinsic gain from search.} We impose that $F$ is continuous, has full support, has a finite density function $f$, and that $F/f$ is non-decreasing.\footnote{This holds for many common distributions such as the exponential distribution, the beta distribution with shape parameter $\beta \geq 1$, and $F(c)=\left(\frac{c-\underline{c}}{\overline{c}-\underline{c}} \right)^\alpha$ on $[\underline{c}, \overline{c}]$ with $\alpha > 0$, which includes the uniform distribution for $\alpha = 1$.}

The designer has a budget of $\overline{v} > 0$ and can set prizes $\boldv$ to incentivize the agents. In addition to setting the prizes for the organic bugs, the designer can insert $K$ artificial bugs, indexed by $k$, into the system, offering rewards $\boldv_a = (v_a^1,\dots,v_a^K) \in [0,\infty)^K$ and setting their complexities $\boldq_a = (q_a^1,\dots,q_a^K) \in [0,1]^K$, where complexity $q_a^k$ is the probability that an agent finds the artificial bug. Artificial bugs have no inherent value to the designer but may nonetheless be useful for the designer in motivating agents to participate.

The game has two stages. In the first stage, the designer sets $\boldv$, $\boldv_a$, and $\boldq_a$. In the second stage, agents decide whether or not to exert costly effort to search for bugs and are rewarded accordingly if they find any. The designer receives their value for the bugs found and pays out the prizes accordingly. We analyze the game by first solving for the (Bayes-Nash) equilibrium behavior of the agents in the second stage, then for the designer's optimal prizes and complexities of the artificial bugs.

We consider two scenarios. First, \Cref{sec:private} considers a \textit{private program}, where the set of agents is finite, $N = \{1,\dots,n\}$. This represents a private bug bounty program, where the designer invites $n$ agents to search for bugs. Second, \Cref{sec:public} considers a \textit{public program}, where we perform an asymptotic analysis of the game in which the number of agents tends to infinity, $n \rightarrow \infty$. This captures a public bug bounty program, where the public can freely choose to participate.


\section{Private Program} \label{sec:private}

This section considers a private program for bug bounty, where the designer invites $n$ agents to search for bugs. First, we characterize the equilibrium behavior of the agents in the second stage. Then, we solve for the designer's optimal choices and determine that it is sufficient to insert only one artificial bug. We end this section by identifying the condition under which inserting an artificial bug is useful.

\subsection{Equilibrium Characterization}

Given $\boldv$, $\boldv_a$, and $\boldq_a$ set by the designer, the agents choose whether or not to search for the bugs. Since participants can only take binary choices, it is without loss of generality to consider only threshold strategies of the agents. A threshold strategy stipulates that an agent searches if and only if his/her private cost is below a certain threshold. We also focus on the symmetric equilibrium, where all agents use the same threshold $c^*$. Such an equilibrium always exists.

To characterize the equilibrium threshold, denote by $\Phi(\hat{c};q)$ the probability that an agent wins the prize associated with a bug with complexity $q$, conditioning on it existing, when the other $n-1$ agents are using a threshold strategy $\hat{c}$.\footnote{This event occurs if the agent finds the bug and is chosen among those who also found it.} Then, agent $i$'s expected benefit of searching is 
\begin{equation*}
\Psi(\hat{c};\boldv,\boldv_a,\boldq_a) \equiv \sum_{l} v^l\mu^l \Phi(\hat{c};q^l) + \sum_{k} v_a^k \Phi(\hat{c};q_a^k),
\end{equation*}
where the first term sums up the expected benefit from finding organic bugs and the second term from finding artificial bugs. Now, suppose that the other $n-1$ agents use the equilibrium threshold $c^*$, then agent $i$ will search if and only if their expected benefit of searching exceeds their private cost: $c_i \leq \Psi(c^*)$. The following result follows.

\begin{lemma} \label{lemma:equilibrium}
    If $\Psi(\underline{c}) \leq \underline{c}$, then $c^* = \underline{c}$. If $\Psi(\overline{c}) \geq \overline{c}$, then $c^* = \overline{c}$. Otherwise, the symmetric equilibrium threshold is $c^* = c^*(\boldv,\boldv_a,\boldq_a)$ is the unique solution to 
    \begin{equation} \label{eq:equilibrium}
        \hat{c} = \Psi(\hat{c};\boldv,\boldv_a,\boldq_a).
    \end{equation}
\end{lemma}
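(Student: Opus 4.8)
The plan is to recast the fixed-point equation \eqref{eq:equilibrium} as locating the unique zero of a single strictly monotone function, so that existence and uniqueness follow from the intermediate value theorem together with a monotonicity (``competition'') argument, and the two boundary cases fall out from the sign of that function at the endpoints.

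First I would pin down the structure of $\Phi(\hat c;q)$. When the other $n-1$ agents use threshold $\hat c$, each searches with probability $F(\hat c)$ and then finds the bug with probability $q$, so the number of rival finders is $M\sim\mathrm{Binomial}\bigl(n-1,\,F(\hat c)q\bigr)$. Since agent $i$ finds the bug with probability $q$ and the prize is split uniformly among all finders,
$$\Phi(\hat c;q)=q\,\E\!\left[\frac{1}{M+1}\right]=\frac{1-\bigl(1-F(\hat c)q\bigr)^n}{n\,F(\hat c)},$$
with the value at $F(\hat c)=0$ fixed by the limit $\Phi=q$. From this I would record two facts: $\Phi(\cdot;q)$ is continuous in $\hat c$ because $F$ is continuous, and it is non-increasing in $\hat c$. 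The monotonicity is the economically meaningful competition effect: raising $\hat c$ raises $F(\hat c)$, which makes $M$ stochastically larger, and since $m\mapsto 1/(m+1)$ is decreasing, $\E[1/(M+1)]$ falls. As $\Psi$ is a non-negative combination of the $\Phi(\cdot;q^l)$ and $\Phi(\cdot;q_a^k)$, it inherits continuity and non-increasingness in $\hat c$.

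Next I would define $G(\hat c):=\Psi(\hat c;\boldv,\boldv_a,\boldq_a)-\hat c$ on $[\underline c,\overline c]$. The key observation is that $G$ is strictly decreasing, since $\Psi$ is non-increasing while $-\hat c$ is strictly decreasing; hence $G$ has at most one zero, which already delivers uniqueness regardless of the prize levels. For existence and the boundary cases I would examine the sign of $G$ at the endpoints. If $\Psi(\underline c)\le\underline c$ then $G(\underline c)\le 0$, so $G<0$ on $(\underline c,\overline c]$, and ``nobody searches'' is self-consistent: at $c^*=\underline c$ we have $F(\underline c)=0$, the type $\underline c$ weakly prefers not to search, and $c^*=\underline c$ is an equilibrium. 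Symmetrically, if $\Psi(\overline c)\ge\overline c$ then $G(\overline c)\ge 0$, so $G>0$ on $[\underline c,\overline c)$, every type searches, and $c^*=\overline c$. In the remaining case $G(\underline c)>0>G(\overline c)$, so continuity and the intermediate value theorem give an interior zero, which strict monotonicity makes unique; this zero is exactly the solution of $\hat c=\Psi(\hat c)$. In each case I would close by checking that the candidate $c^*$ is genuinely a symmetric equilibrium, i.e.\ an agent facing rivals at threshold $c^*$ has best-response threshold $c^*$, which is immediate since agent $i$ searches iff $c_i\le\Psi(c^*)$ and $G(c^*)$ carries the right sign.

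The step I expect to be the main obstacle is establishing the monotonicity of $\Phi(\cdot;q)$ cleanly. The stochastic-dominance argument is conceptually simple, but making it fully rigorous—or, alternatively, differentiating the closed form and signing the derivative uniformly in the parameters, including at the removable singularity $F(\hat c)=0$—is where the real work lies. Everything else reduces to continuity plus the intermediate value theorem.
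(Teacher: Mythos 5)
Your proposal is correct and follows essentially the same route as the paper's proof: $\Psi$ is continuous and decreasing in $\hat c$ (the competition effect on $\Phi$), so the sign of $\Psi(\hat c)-\hat c$ at the endpoints settles the two corner cases, and continuity plus strict monotonicity of $\Psi(\hat c)-\hat c$ gives existence and uniqueness of the interior fixed point via the intermediate value theorem. Two minor differences: you derive the closed form of $\Phi$ up front (the paper defers that computation to \Cref{lemma:keyrelation} and needs only monotonicity here), and you omit the paper's preliminary step showing that \emph{every} equilibrium strategy must be a threshold strategy, which the paper establishes by a contradiction argument using that the payoff is strictly decreasing in the private cost, before invoking the indifference condition at the threshold.
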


The equilibrium threshold $c^*$ is the unique fixed point of $\Psi$ and, therefore, the comparative statics properties are derived from the properties of $\Psi$, which in turn follows the properties of $\Phi$. The function $\Phi(\hat{c};q)$ is strictly decreasing in $\hat{c}$ and strictly increasing in $q$. The comparative statics results are as follows. Increasing $v^l$, $\mu^l$, and $v_a^k$, for some $l$ or $k$ increases $c^*$, as agents are more incentivized to search if the prizes and the probabilities that bugs exist are high. Moreover, decreasing the complexities\textemdash increasing $q^l$'s\textemdash also increases $c^*$.

\subsection{Optimal Prizes and Number of Artificial Bugs}

We now consider the designer's problem of choosing the optimal prizes (for organic and artificial bugs) and complexities (for artificial bugs). Let the probability that a bug with complexity $q$ is found be denoted by 
\begin{equation*} 
P(\hat{c};q) \equiv 1 - (1-qF(\hat{c}))^n
\end{equation*}
when agents follow the threshold $\hat{c}$ to decide whether to search or not. The designer chooses $\boldv$, $\boldv_a$, and $\boldq_a$ to maximize their objective function consisting of two components. The first is $\sum_l w^l \mu^l P(c^*;q^l)$, which is the sum of the values gained from any organic bugs found. The second component captures the rewards paid out to the agents for any organic \textit{and} artificial bugs found: $-\sum_l v^l \mu^l P(c^*;q^l) - \sum_{k} v_a^k P(c^*;q_a^k)$.

Therefore, the designer with a prize budget of $\overline{v}$ solves
\begin{equation}\label{eq:designer_problem}
\begin{aligned}
    \max_{\boldv, \boldv_a, \boldq_a} \; \sum_{l} (w^l-v^l) \mu^l P(c^*;q^l) - \sum_{k} v_a^k P(c^*;q_a^k) \\ \st \sum_{l} v^l + \sum_k v_a^k \leq \overline{v},\; v^l \geq 0,\; v_a^k \geq 0,\; q_a^k \in [0,1],
    \end{aligned}
\end{equation}
where $c^* = c^*(\boldv,\boldv_a,\boldq_a)$ is the symmetric equilibrium threshold that uniquely solves Equation \eqref{eq:equilibrium}. Let $\boldv^*$, $\boldv_a^*$, and $\boldq_a^*$ denote a solution to the designer's problem. The next result states that it is without loss of generality to insert only one artificial bug, if any.

\begin{lemma} \label{lemma:onebug}
If $(\boldv^*,\boldv_a^*,\boldq_a^*)$ is a solution, then there exists another solution $(\boldv^{**},\boldv_a^{**},\boldq_a^{**})$ such that $\boldv_a^{**} = (v_a^{1**},0,\dots,0)$.
\end{lemma}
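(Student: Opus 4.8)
The plan is to exploit a tight link between the winning probability $\Phi$ and the discovery probability $P$ that makes the incentive provided by a bug proportional to its expected cost. First I would compute $\Phi$ explicitly. Conditioning on the bug existing and on agent $i$ searching, the number $J$ of the other $n-1$ searchers who also find it is Binomial with parameter $qF(\hat c)$, and agent $i$ wins with probability $\tfrac{1}{J+1}$; evaluating $\E[\tfrac{1}{J+1}]$ via the standard identity $\sum_{j=0}^{m}\binom{m}{j}p^j(1-p)^{m-j}\tfrac{1}{j+1}=\tfrac{1-(1-p)^{m+1}}{(m+1)p}$ (with $m=n-1$, $p=qF(\hat c)$) yields
\begin{equation*}
\Phi(\hat c;q)=\frac{1-(1-qF(\hat c))^n}{nF(\hat c)}=\frac{P(\hat c;q)}{nF(\hat c)}.
\end{equation*}
The punchline is that, at a fixed threshold $\hat c$ with $F(\hat c)>0$, the factor $\tfrac{1}{nF(\hat c)}$ does not depend on $q$: the incentive contribution $v\,\Phi(\hat c;q)$ and the expected payout $v\,P(\hat c;q)$ of any (artificial) bug are proportional with a common constant.

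With this identity in hand, the strategy is to keep the organic prizes fixed, pin the equilibrium at the original threshold $c^*$, and collapse all $K$ artificial bugs into a single one reproducing their aggregate marginal incentive $A\equiv\sum_k v_a^{k*}\Phi(c^*;q_a^{k*})$. Writing $\Psi$ as the sum of the unchanged organic part and the artificial part, preserving the value of the artificial part at $c^*$ leaves $\Psi(c^*)=c^*$ intact, so $c^*$ is still a fixed point of the new $\Psi$; since \Cref{lemma:equilibrium} guarantees a unique equilibrium threshold, the equilibrium stays at $c^*$. By the proportionality above, matching $A$ at the fixed $c^*$ automatically matches the total artificial payout $\sum_k v_a^{k*}P(c^*;q_a^{k*})=nF(c^*)A$, so the designer's objective is unchanged. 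It then remains only to realize incentive exactly $A$ with a single admissible bug that respects the budget.

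For the construction I would set $q_a^{1**}=1$ and $v_a^{1**}=A/\Phi(c^*;1)$, with $v_a^{k**}=0$ for $k\geq 2$ and $\boldv^{**}=\boldv^*$. Admissibility of the complexity is immediate, and the budget is respected because $\Phi(\cdot;q)$ is increasing in $q$: since each $\Phi(c^*;q_a^{k*})\leq\Phi(c^*;1)$,
\begin{equation*}
v_a^{1**}=\frac{\sum_k v_a^{k*}\Phi(c^*;q_a^{k*})}{\Phi(c^*;1)}\leq \sum_k v_a^{k*},
\end{equation*}
so the new profile uses no more budget than the old one and is feasible. Since it attains the same objective value, it is also optimal, which is the claim.

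The main obstacle is really the first step: spotting and verifying the exact proportionality $\Phi=P/(nF)$, since it is precisely the $q$-independence of the conversion factor that lets one bug simultaneously match both the incentive and the cost of many. The only other point requiring care is the boundary case $c^*=\underline c$, where $F(c^*)=0$ and the identity degenerates; there the artificial payout term vanishes outright, so the objective does not depend on the artificial bugs and one may simply zero out all but the first, checking only that $\Psi(\underline c)\le\underline c$ is preserved so that $c^*=\underline c$ remains the equilibrium.
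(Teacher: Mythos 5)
Your proposal is correct and follows essentially the same route as the paper: both arguments collapse all artificial bugs into a single one by preserving the aggregate incentive term $\sum_k v_a^{k*}\Phi(c^*;q_a^{k*})$ at the original equilibrium threshold, use the identity $\Phi(\hat c;q)=P(\hat c;q)/(nF(\hat c))$ (the paper's \Cref{lemma:keyrelation}) to conclude that the expected payout is matched as well, and invoke monotonicity of $\Phi$ (equivalently $P$) in $q$ to verify the budget constraint. The only cosmetic differences are that you set $q_a^{1**}=1$ whereas the paper keeps $q_a^{1**}=\max_k q_a^{k*}$, and that you explicitly treat the degenerate case $c^*=\underline c$, which the paper leaves implicit.
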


\Cref{lemma:onebug} holds because it is inefficient to have multiple artificial bugs of different complexities, since artificial bugs are only useful indirectly, namely by incentivizing the agents to participate. For this purpose, it is always most efficient when the budget is spent on the prize corresponding to the artificial bug with the lowest complexity. In light of this, the designer's problem boils down to choosing $\boldv$, $v_a$, and $q_a$.

The key to solving the designer's problem is to note that $P(\hat{c};q)$ can be decomposed in terms of each agent's probability of winning the corresponding prize.

\begin{lemma} \label{lemma:keyrelation}
For a bug with complexity $q$, we have $P(\hat{c};q) = nF(\hat{c}) \Phi(\hat{c};q).$
\end{lemma}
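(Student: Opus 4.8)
The plan is to prove the identity by a direct probabilistic (double-counting) argument, with a computational fallback via a binomial identity. Throughout I consider the scenario in which all $n$ agents follow the common threshold $\hat{c}$, so that each agent independently searches with probability $F(\hat{c})$ and, conditional on the bug existing, finds it with probability $q$; hence each agent independently both finds the bug and becomes a contender for the prize with probability $qF(\hat{c})$.

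First I would observe that the two sides measure the same event from different viewpoints. The left side $P(\hat{c};q) = 1 - (1-qF(\hat{c}))^n$ is, by definition, the probability that at least one of the $n$ agents finds the bug (conditional on existence), i.e. the probability that the prize is awarded at all. The key step is to decompose this event: whenever the bug is found, the prize is handed to exactly one agent, so the event ``prize awarded'' is the disjoint union over $j \in \{1,\dots,n\}$ of the events ``agent $j$ wins.'' By symmetry, since all agents use the same threshold and have identical finding probabilities, these $n$ events are equiprobable, so the unconditional probability that a fixed (tagged) agent wins equals $P(\hat{c};q)/n$.

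Next I would connect this tagged-agent win probability to $\Phi$. The unconditional probability that the tagged agent wins factors as the probability they search, $F(\hat{c})$, times the probability they win conditional on searching while the other $n-1$ agents follow $\hat{c}$, which is exactly $\Phi(\hat{c};q)$ by definition (the win probability being $0$ if they do not search). Equating the two expressions for the tagged agent's win probability yields $F(\hat{c})\,\Phi(\hat{c};q) = P(\hat{c};q)/n$, which rearranges to the claim.

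The main obstacle is conceptual rather than computational: one must carefully align the conditioning conventions in the two definitions, namely that $P$ is symmetric across all $n$ agents whereas $\Phi$ singles out one agent conditional on that agent searching. If I preferred a self-contained verification, I would instead expand $\Phi(\hat{c};q) = q\sum_{m=0}^{n-1}\binom{n-1}{m}(qF(\hat{c}))^m(1-qF(\hat{c}))^{n-1-m}\tfrac{1}{m+1}$, where $m$ counts how many of the other $n-1$ agents also find the bug and $\tfrac{1}{m+1}$ is the chance of being selected among the $m+1$ finders, and then apply the identity $\tfrac{1}{m+1}\binom{n-1}{m} = \tfrac{1}{n}\binom{n}{m+1}$ to reindex the sum into $\tfrac{1}{nF(\hat{c})}\bigl(1-(1-qF(\hat{c}))^n\bigr)$; this reindexing is the only nontrivial manipulation, and it reproduces $P(\hat{c};q)/(nF(\hat{c}))$ directly.
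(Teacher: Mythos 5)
Your proposal is correct, and both of your routes lead cleanly to the identity, but your primary argument is genuinely different from the paper's formal proof. The paper proves the lemma by brute-force computation: it expands $\Phi(\hat{c};q)$ as a double sum over the number $k$ of other agents who search and the number $t$ of those who also find the bug, and then collapses the sum by applying ``a generalization of the binomial theorem twice,'' deferring the details to an external reference. Your main argument instead turns the paper's post-lemma intuition into the proof itself: the event that the prize is awarded is the disjoint union of the $n$ equiprobable events that a given agent wins, so a tagged agent wins with unconditional probability $P(\hat{c};q)/n$, which by the definition of $\Phi$ also equals $F(\hat{c})\,\Phi(\hat{c};q)$. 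This double-counting argument is shorter, self-contained, and makes transparent why the relation survives under heterogeneity (as the paper's footnote asserts for the generalization $P = \sum_i F_i(\hat{c}_i)\Phi_i$), at the cost of requiring care that the conditioning in $P$ (symmetric, unconditional on search) and in $\Phi$ (a single agent, conditional on searching) are being aligned correctly --- which you do. Your computational fallback is also a slight improvement on the paper's: by merging ``searches'' and ``finds'' into a single Bernoulli event of probability $qF(\hat{c})$ you reduce the double sum to a single sum, and the identity $\tfrac{1}{m+1}\binom{n-1}{m}=\tfrac{1}{n}\binom{n}{m+1}$ does all the work explicitly rather than by citation. Either version would be acceptable as a replacement for the paper's proof.
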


Intuitively, if a bug is found, one of the agents has to win the prize. To win the prize, the agent has to have a cost below $\hat{c}$ \textit{and} to find the bug \textit{and} be chosen among all agents that search and also find the bug. Ex-ante, the former happens with probability $F(\hat{c})$ and the latter with probability $\Phi(\hat{c};q)$. From the perspective of the designer, all individuals are ex-ante identical; thus, multiplying by $n$ yields the probability that the bug is found.\footnote{The relation continues to hold if it is common knowledge that agents are heterogeneous in other aspects. For instance, agents may have different probabilities of finding a certain bug, or their costs may be drawn from different distributions. In such cases, the equilibrium thresholds differ across agents and we have the relation: $P((\hat{c}_1,\dots,\hat{c}_n);(q_1,\dots,q_n)) = \sum_{i} F_i(\hat{c}_i) \Phi_i(\hat{c}_{-i};(q_i,q_{-i}))$.} 

With \Cref{lemma:keyrelation}, the designer's objective function simplifies to $\sum_{l} w^l\mu^l P(\hat{c};q^l) - n F(\hat{c})\hat{c}$. Therefore, the designer's problem is to induce an optimal equilibrium threshold, denoted $\hat{c}^*$, that maximizes the objective function, subject to the equilibrium threshold being \textit{achievable} given the budget $\overline{v}$.\footnote{Achievability means that the available budget is sufficient to pay all prizes that have been promised.} Define the set achievable equilibrium thresholds as
\begin{equation*}
    C(\overline{v}) \equiv \left\{ \hat{c}: \hat{c} = \Psi(\hat{c};\boldv,v_a,q_a), \; \sum_{l} v^l + v_a \leq \overline{v},\; v^l \geq 0,\; v_a \geq 0,\; q_a \in [0,1] \right\}.
\end{equation*}
The set of achievable thresholds is an interval. Given that $\Phi(\hat{c};q)$ is strictly increasing in $q$, the highest achievable threshold is when the entire budget goes to the artificial bug that is found with certainty.

\begin{lemma} \label{lemma:achievablec}
    Let $c_a(\overline{v})$ be the unique fixed point of $\overline{v}\Phi(\hat{c};1)$. Then, $c_a(\overline{v})$ increases in $\overline{v}$ and $C(\overline{v}) = [0,c_a(\overline{v})]$.
\end{lemma}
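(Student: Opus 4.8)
The plan is to characterize the achievable set pointwise: for each candidate threshold $\hat{c}$, I would compute the largest benefit $\Psi(\hat{c};\cdot)$ the designer can generate while respecting the budget, and then observe that $\hat{c}$ is achievable precisely when this maximal benefit is at least $\hat{c}$ (and $\hat{c} \geq 0$, since $\Psi \geq 0$ forces any fixed point to be nonnegative). Concretely, I would establish $C(\overline{v}) = \{\hat{c} \geq 0 : \hat{c} \leq \overline{v}\,\Phi(\hat{c};1)\}$, after which the stated interval form and the monotonicity in $\overline{v}$ both follow from elementary properties of the map $\hat{c} \mapsto \overline{v}\,\Phi(\hat{c};1) - \hat{c}$.

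The key step is to maximize $\Psi(\hat{c};\boldv,v_a,q_a) = \sum_l v^l \mu^l \Phi(\hat{c};q^l) + v_a \Phi(\hat{c};q_a)$ over the budget set for a fixed $\hat{c}$. With $\hat{c}$ held fixed this objective is linear in the prizes $(\boldv, v_a)$, so its maximum is attained by placing the whole budget $\overline{v}$ on whichever instrument carries the largest coefficient. Since $\Phi(\hat{c};\cdot)$ is increasing in complexity and $\mu^l \leq 1$, each organic coefficient satisfies $\mu^l \Phi(\hat{c};q^l) \leq \Phi(\hat{c};1)$, while the artificial coefficient obeys $\Phi(\hat{c};q_a) \leq \Phi(\hat{c};1)$ with equality at $q_a = 1$. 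Hence the maximal benefit equals $\overline{v}\,\Phi(\hat{c};1)$, attained by spending the entire budget on a single artificial bug of complexity $q_a = 1$. For the matching lower range I would exhibit, for any target in $[0,\overline{v}\,\Phi(\hat{c};1)]$, a feasible profile attaining it: taking $v^l = 0$ for all $l$, $q_a = 1$, and $v_a = t\overline{v}$ for $t \in [0,1]$ makes $\Psi(\hat{c};\cdot) = t\,\overline{v}\,\Phi(\hat{c};1)$ sweep the whole interval. Thus $\hat{c}$ solves the fixed-point equation for some feasible profile if and only if $0 \leq \hat{c} \leq \overline{v}\,\Phi(\hat{c};1)$, which is the claimed description of $C(\overline{v})$.

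It then remains to translate this into $[0,c_a(\overline{v})]$. I would study $g(\hat{c}) \equiv \overline{v}\,\Phi(\hat{c};1) - \hat{c}$, which is strictly decreasing because $\Phi(\hat{c};1)$ is strictly decreasing in $\hat{c}$ and $-\hat{c}$ is strictly decreasing. Since $\Phi(\hat{c};1) > 0$ we have $g(0) > 0$, whereas once $F(\hat{c}) = 1$ one has $\Phi(\hat{c};1) = 1/n$ constant, so $g(\hat{c}) = \overline{v}/n - \hat{c} \to -\infty$; by continuity and strict monotonicity $g$ has a unique zero, namely the fixed point $c_a(\overline{v})$ of $\overline{v}\,\Phi(\hat{c};1)$. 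Consequently $\{\hat{c} \geq 0 : g(\hat{c}) \geq 0\} = [0,c_a(\overline{v})]$, which equals $C(\overline{v})$. For the monotonicity in the budget, I would note that $g(\hat{c};\overline{v})$ is strictly increasing in $\overline{v}$ (again as $\Phi(\hat{c};1) > 0$), so if $\overline{v}' > \overline{v}$ then $g(c_a(\overline{v});\overline{v}') > g(c_a(\overline{v});\overline{v}) = 0$; since $g(\cdot;\overline{v}')$ is decreasing with unique root $c_a(\overline{v}')$, this forces $c_a(\overline{v}') > c_a(\overline{v})$.

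The main obstacle I anticipate is the maximization step, namely verifying that at every fixed threshold, concentrating the entire budget on a certain-to-be-found artificial bug dominates every alternative allocation. This is what makes the artificial bug the binding instrument that pins down the top of the achievable interval, and it rests precisely on combining $\mu^l \leq 1$ with the monotonicity of $\Phi$ in complexity; the remaining steps are routine monotonicity and intermediate-value arguments.
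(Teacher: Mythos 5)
Your proof is correct and follows essentially the same route as the paper's: both rest on the observation that $\mu^l\Phi(\hat{c};q^l)\leq\Phi(\hat{c};1)$ and $\Phi(\hat{c};q_a)\leq\Phi(\hat{c};1)$ (with equality at $q_a=1$), so that at any threshold the maximal achievable incentive is $\overline{v}\,\Phi(\hat{c};1)$, obtained by spending the entire budget on a certain-to-be-found artificial bug. If anything, your write-up is slightly more complete, since you derive the interval structure from the strict monotonicity of $\hat{c}\mapsto\overline{v}\,\Phi(\hat{c};1)-\hat{c}$ and explicitly prove the monotonicity of $c_a(\overline{v})$ in $\overline{v}$, which the paper asserts without argument.
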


Taken together, the previous lemmata show that in effect, the designer's problem is to maximize
\begin{equation} \label{eq:designerobj}
   W(\hat{c}) \equiv \sum_{l} w^l\mu^l P(\hat{c};q^l) - n F(\hat{c})\hat{c},
\end{equation}
subject to $\hat{c} \in [0,c_a(\overline{v})]$.

Our first main result characterizes the solution to the designer's problem. Define 
\begin{equation}\label{eq:foc}
    \Omega(\hat{c}) \equiv \sum_l w^l \mu^l q^l (1-q^lF(\hat{c}))^{n-1} - \frac{F(\hat{c})}{f(\hat{c})}
\end{equation}
and let $\tilde{c}$ be its unique fixed point, which is increasing in $w^l$ and $\mu^l$. We have

\begin{theorem} \label{thm:opt}
The optimal equilibrium threshold is $\hat{c}^* = \min\{\tilde{c},c_a(\overline{v})\}$. A set of prizes $\boldv^*$, $v_a^*$ and complexity $q_a^*$ that solve
$\hat{c}^* = \Psi(\hat{c}^*;\boldv, v_a,q_a)$ is optimal.
\end{theorem}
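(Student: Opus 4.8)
The plan is to turn the constrained optimization of the reduced objective $W$ from \eqref{eq:designerobj} over the feasible interval into a textbook single-peaked maximization. Concretely, I would show that $W$ is strictly quasi-concave on $[0,c_a(\overline v)]$ with a unique interior-type maximizer at $\tilde c$, and then combine this with \Cref{lemma:achievablec}, which identifies the feasible set as exactly $[0,c_a(\overline v)]$. Clipping the unconstrained peak $\tilde c$ to this interval then yields $\hat c^\ast=\min\{\tilde c,c_a(\overline v)\}$, and achievability of this threshold (hence existence of the implementing prizes) is immediate from the same lemma.

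First I would differentiate $W$. Writing $P(\hat c;q)=1-(1-qF(\hat c))^n$ and differentiating \eqref{eq:designerobj}, a direct computation gives
\[
W'(\hat c)=nf(\hat c)\left[\sum_l w^l\mu^l q^l\bigl(1-q^lF(\hat c)\bigr)^{n-1}-\hat c-\frac{F(\hat c)}{f(\hat c)}\right]=nf(\hat c)\,\bigl[\Omega(\hat c)-\hat c\bigr].
\]
Since $f>0$ on the support (full support), the sign of $W'(\hat c)$ coincides with the sign of $\Omega(\hat c)-\hat c$, so the stationarity condition is precisely $\Omega(\hat c)=\hat c$, explaining why $\tilde c$ is defined as the fixed point of $\Omega$.

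The decisive step is to show that $\Omega(\hat c)-\hat c$ is strictly decreasing, so that it crosses zero exactly once, passing from positive to negative at $\tilde c$. Each summand $(1-q^lF(\hat c))^{n-1}$ is strictly decreasing in $\hat c$ because $F$ is strictly increasing, $q^l\in(0,1]$, and $q^lF(\hat c)\le 1$; the maintained assumption that $F/f$ is non-decreasing makes $-F(\hat c)/f(\hat c)$ non-increasing; and $-\hat c$ is strictly decreasing. Summing these, $\Omega(\hat c)-\hat c$ is strictly decreasing, which simultaneously gives uniqueness of $\tilde c$ and, via the display above, shows $W'>0$ on $[0,\tilde c)$ and $W'<0$ on $(\tilde c,\infty)$; i.e.\ $W$ is single-peaked with maximizer $\tilde c$. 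The constrained conclusion is then routine: if $\tilde c\le c_a(\overline v)$ the peak is feasible and $\hat c^\ast=\tilde c$; if $\tilde c> c_a(\overline v)$ then $W$ is strictly increasing throughout $[0,c_a(\overline v)]$ and the maximum is attained at the right endpoint $c_a(\overline v)$. In both cases $\hat c^\ast=\min\{\tilde c,c_a(\overline v)\}$ (when $\tilde c\ge 0$, the relevant case; otherwise the corner $\hat c^\ast=0$ obtains and should be noted). Finally, since $\hat c^\ast\in[0,c_a(\overline v)]=C(\overline v)$, \Cref{lemma:achievablec} guarantees prizes $\boldv^\ast,v_a^\ast,q_a^\ast$ with $\sum_l v^l+v_a^\ast\le\overline v$ solving $\hat c^\ast=\Psi(\hat c^\ast;\boldv,v_a,q_a)$, which is exactly the claimed optimal implementation.

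I expect the main obstacle to be the strict monotonicity of $\Omega(\hat c)-\hat c$, and in particular pinning down the exact role of the $F/f$ non-decreasing hypothesis: this is the only place that assumption enters, and it is what upgrades ``a stationary point'' to ``the unique peak,'' without which the clean $\min$ characterization could fail (multiple critical points would break the endpoint argument). A secondary caveat is differentiability: the paper posits only a finite density $f$, so to make the derivative of $W$ rigorous I would either assume $f$ continuous, or bypass pointwise derivatives entirely by arguing the single-crossing of $\Omega(\hat c)-\hat c$ directly and transferring it to the continuous map $W$ through a monotonicity (single-crossing-of-differences) argument.
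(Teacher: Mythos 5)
Your proposal is correct and follows essentially the same route as the paper's proof: differentiate $W$ to obtain $W'(\hat c)=nf(\hat c)\bigl(\Omega(\hat c)-\hat c\bigr)$, use the monotonicity of $(1-q^lF(\hat c))^{n-1}$ together with the non-decreasing $F/f$ assumption to get single-peakedness at the unique fixed point $\tilde c$, and then clip to the achievable interval $[0,c_a(\overline v)]$ from \Cref{lemma:achievablec}. Your added caveats (the corner case $\tilde c\le 0$ and the differentiability of $F$) are reasonable refinements but do not change the argument.
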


\subsection{Usefulness of Artificial Bugs}

Next, we identify when inserting an artificial bug is useful. Note that it is feasible for the designer not to add any artificial bug into the system, as $v_a = 0$ or $q_a = 0$ are feasible. In other words, the set of achievable equilibrium thresholds without any artificial bug is defined as
\begin{equation*}
    C_0(\overline{v}) \equiv \left\{ \hat{c}: \hat{c} = \Psi(\hat{c};\boldv,0,0), \; \sum_{l} v^l \leq \overline{v},\; v^l \geq 0 \right\} \subseteq  C(\overline{v}).
\end{equation*}
Analogously to \Cref{lemma:achievablec}, let $c^l(\overline{v})$ be the unique fixed point of $\overline{v}\mu^l\Phi(\hat{c};q^l)$. Then, $C_0(\overline{v}) = [0,c_0(\overline{v})]$, where $c_0(\overline{v}) = \max_{l} c^l(\overline{v})$. This leads to the next main result.

\begin{theorem} \label{thm:artificialbug}
    Inserting an artificial bug is beneficial to the designer if and only if $\tilde{c} > c_0(\overline{v})$.
\end{theorem}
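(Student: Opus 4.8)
The plan is to reduce both versions of the designer's problem---with and without an artificial bug---to maximizing the single-peaked objective $W$ over an interval, and then to compare the two maximal values using monotonicity of $W$ together with a strict comparison of the two budget-feasible threshold caps $c_a(\overline{v})$ and $c_0(\overline{v})$. By \Cref{thm:opt}, the best achievable value when an artificial bug is allowed is $W(\min\{\tilde{c},c_a(\overline{v})\})$; the identical argument applied to the smaller feasible set $C_0(\overline{v})=[0,c_0(\overline{v})]$ shows that the best value attainable without any artificial bug is $W(\min\{\tilde{c},c_0(\overline{v})\})$. ``Beneficial'' then means precisely that the former strictly exceeds the latter.

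First I would record the shape of $W$. Differentiating \eqref{eq:designerobj} and using $\tfrac{\partial}{\partial \hat{c}}P(\hat{c};q)=nqf(\hat{c})(1-qF(\hat{c}))^{n-1}$ gives $W'(\hat{c})=nf(\hat{c})\,[\Omega(\hat{c})-\hat{c}]$, so $W'$ carries the sign of $\Omega(\hat{c})-\hat{c}$. Since each factor $(1-q^lF(\hat{c}))^{n-1}$ is decreasing in $\hat{c}$ and $F/f$ is nondecreasing by assumption, $\Omega$ is strictly decreasing; hence $\Omega(\hat{c})-\hat{c}$ is strictly decreasing and vanishes only at its unique fixed point $\tilde{c}$. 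Therefore $W$ is strictly increasing on $[0,\tilde{c}]$ and strictly decreasing thereafter, which is what makes each constrained maximizer equal to $\min\{\tilde{c},\cdot\}$.

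The key step is the strict inequality $c_a(\overline{v})>c_0(\overline{v})$. Recall $c_a(\overline{v})$ is the fixed point of $\overline{v}\,\Phi(\hat{c};1)$ and $c^l(\overline{v})$ that of $\overline{v}\mu^l\Phi(\hat{c};q^l)$, with $c_0(\overline{v})=\max_l c^l(\overline{v})$. Because $\Phi$ is strictly increasing in $q$ and $\mu^l\le 1$, $q^l\le 1$, we obtain the pointwise domination $\overline{v}\,\Phi(\hat{c};1)\ge \overline{v}\mu^l\Phi(\hat{c};q^l)$, strict whenever $(\mu^l,q^l)\neq(1,1)$. Both maps are strictly decreasing in $\hat{c}$, so evaluating the dominating map at $c^l(\overline{v})$ yields $\overline{v}\,\Phi(c^l(\overline{v});1)>c^l(\overline{v})$, and since $\overline{v}\,\Phi(\hat{c};1)-\hat{c}$ is strictly decreasing this forces $c_a(\overline{v})>c^l(\overline{v})$; taking the maximum over $l$ gives $c_a(\overline{v})>c_0(\overline{v})$. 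The lone degenerate case is an organic bug that both exists and is found with certainty, which already replicates a perfect artificial bug; I would note that it is ruled out by $\mu^l<1$ or $q^l<1$.

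Finally I would assemble the iff. Writing $a=\min\{\tilde{c},c_a(\overline{v})\}$ and $b=\min\{\tilde{c},c_0(\overline{v})\}$, both lie in $[0,\tilde{c}]$, where $W$ is strictly increasing, so $W(a)>W(b)\iff a>b$. If $\tilde{c}\le c_0(\overline{v})$, then $a=b=\tilde{c}$ and there is no gain. If $\tilde{c}>c_0(\overline{v})$, then $b=c_0(\overline{v})$, while $a>c_0(\overline{v})$ because both $\tilde{c}>c_0(\overline{v})$ and $c_a(\overline{v})>c_0(\overline{v})$; hence $a>b$ and the artificial bug strictly helps. This delivers exactly ``beneficial $\iff \tilde{c}>c_0(\overline{v})$.'' I expect the main obstacle to be the strict separation $c_a(\overline{v})>c_0(\overline{v})$, since the entire ``if'' direction collapses if the artificial bug fails to strictly enlarge the set of achievable thresholds.
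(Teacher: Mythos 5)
Your proof is correct and follows the same route as the paper: both arguments reduce the comparison to maximizing the single-peaked objective $W$ over $[0,c_0(\overline{v})]$ versus $[0,c_a(\overline{v})]$ and invoke \Cref{thm:opt}. The one place you go beyond the paper is the strict inequality $c_a(\overline{v})>c_0(\overline{v})$: the paper's proof only records $c_0(\overline{v})\leq c_a(\overline{v})$, which by itself does not deliver the ``if'' direction, since if the two caps coincided then $\tilde{c}>c_0(\overline{v})$ would yield no strict gain. Your fixed-point domination argument (evaluating the dominating map $\overline{v}\Phi(\cdot;1)$ at $c^l(\overline{v})$ and using strict monotonicity) supplies exactly this missing step, and you are right to flag the degenerate case $\mu^l=q^l=1$: the model permits $\mu^l,q^l\in(0,1]$, so that case is not formally excluded, and there an organic bug already plays the role of a perfect artificial bug, meaning the theorem implicitly assumes it away.
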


The condition in \Cref{thm:artificialbug} holds if $w^l$ and $\mu^l$ are high, or $\overline{v}$ is not too high. These cases will be illustrated in the numerical examples in \Cref{sec:example}. In other words, inserting an artificial bug is beneficial if the designer has high valuations for finding organic bugs, if organic bugs are likely to exist, or if the designer's budget is low.


\section{Public Program}  \label{sec:public}

 In this section, we examine the public bug bounty program and, in particular, the asymptotic behavior of the game as $n \rightarrow \infty$. This is of special interest since public bug bounty programs make use of the knowledge of a large group of experts. Throughout the section, we denote the equilibrium threshold and the equilibrium success probability when there are $n$ agents in the public program by $c_n = c^*(n)$ and $P_n(q) = P(c^*(n);q)$, respectively. We assume throughout the section that $\underline{c} > 0$. This is reasonable in most circumstances since even high-ability agents have to exert effort to find the bugs and communicate with the designer.\footnote{For $\underline{c}=0$, it holds that $P_n(q) \rightarrow 1$ for any prizes and complexities (see \citet{gersbach2023}). This case seems to be less plausible.} For ease of exposition, we impose two further assumptions, which exclude pathological cases in which a bug bounty program can not generate any benefits. First, we assume that $\overline{v} \Psi(\underline{c})\geq \underline{c}$, i.e., that the budget constraint is greater than the smallest possible cost for an agent to search. Otherwise, no agent will participate, and there is no point in designing a bug bounty program. Second, we assume that $\sum_l w^l \mu^l q^l \geq \underline{c}$ since otherwise, the designer can never compensate an agent for searching. Again, the budget constraint $\overline{v}$ and the bugs described by $\boldw$ and $\boldq$ are given for the designer.

We proceed as follows: First, we obtain results for the equilibrium behavior of the agents in the second stage as $n \rightarrow \infty$. Then, given the equilibrium behavior of the agents, we solve for the designer's optimal choices. Next, we verify that this is equivalent to solving first for the designer's optimal choices (i.e., the full game) and then performing an asymptotic analysis on such choices. Lastly, we analyze the usefulness of inserting an artificial bug in the public program.

\subsection{Asymptotic Behavior}

Given $\boldv$, $v_q$, and $q_a$, the following lemma summarizes the asymptotic results for the second stage game:

\begin{lemma}\label{lemma:limits}
    Let $\kappa^*$ be the unique fixed point of 
    \begin{equation*}
    \Psi_{\infty}(\hat{\kappa};\boldv,v_a,q_a) \equiv \sum_l v^l \mu^l \frac{1-e^{-q^l \hat{\kappa}}}{\underline{c}} + v_a \frac{1-e^{-q_a \hat{\kappa}}}{\underline{c}}.
    \end{equation*} 
    Then, for $n \rightarrow \infty$
    \begin{enumerate}
        \item[(i)] $c_n \rightarrow \underline{c}$,
        \item[(ii)] $n F(c_n) \rightarrow \kappa^*$,
        \item[(iii)] $P_n(q) \rightarrow P_\infty(q)=1-e^{-q \kappa^*}$.
    \end{enumerate}
\end{lemma}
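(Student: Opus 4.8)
The plan is to track the rescaled participation mass $\kappa_n \equiv nF(c_n)$ and pass to the limit in the fixed-point equation characterizing $c_n$. From \Cref{lemma:keyrelation} together with $P(\hat c;q)=1-(1-qF(\hat c))^n$ one reads off the closed form $\Phi(\hat c;q)=\frac{1-(1-qF(\hat c))^n}{nF(\hat c)}$, so whenever the equilibrium is interior the condition $c_n=\Psi(c_n;\boldv,v_a,q_a)$ becomes
\begin{equation*}
c_n=\sum_l v^l\mu^l\frac{1-(1-q^lF(c_n))^n}{nF(c_n)}+v_a\frac{1-(1-q_aF(c_n))^n}{nF(c_n)}.
\end{equation*}
With $\kappa_n=nF(c_n)$ each power is $(1-q\kappa_n/n)^n$, which is exactly what produces the Poisson limit $e^{-q\kappa}$ and the function $\Psi_\infty$.

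For (i) I would argue on the bounded sequence $c_n\in[\underline c,\overline c]$ that every limit point equals $\underline c$. The upper boundary $c_n=\overline c$ is impossible for large $n$, since $F(\overline c)=1$ forces $\Phi(\overline c;q)\to 0$ and hence $\Psi(\overline c)\to 0<\overline c$, ruling out the corresponding case of \Cref{lemma:equilibrium}. If a limit point satisfied $c^\infty>\underline c$, the thresholds along the approximating subsequence would be interior for large indices, so $c_{n_j}=\Psi(c_{n_j})$; but $F(c^\infty)>0$ gives $n_jF(c_{n_j})\to\infty$, whence $\Phi(c_{n_j};q)\to 0$ for every $q$ and $\Psi(c_{n_j})\to 0$, contradicting $c_{n_j}\ge\underline c>0$. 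Hence $c_n\to\underline c$.

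Part (ii) is the heart of the argument, and I would split on the sign of $\Delta\equiv\sum_l v^l\mu^l q^l+v_aq_a-\underline c$, noting that $\Phi(\underline c;q)=q$ (no rival searches when others use threshold $\underline c$) gives $\Psi(\underline c)=\sum_l v^l\mu^l q^l+v_aq_a$ and $\Delta=\underline c\,(\Psi_\infty'(0)-1)$. If $\Delta\le 0$ then $\Psi(\underline c)\le\underline c$, so by \Cref{lemma:equilibrium} $c_n=\underline c$ and $\kappa_n=0$ for all $n$; since $\Psi_\infty$ is strictly concave with $\Psi_\infty(0)=0$ and $\Psi_\infty'(0)\le 1$, its only fixed point is $\kappa^*=0$, matching $\kappa_n\to 0$. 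If $\Delta>0$ the equilibrium is interior for large $n$, and I would first establish the a priori bound $\kappa_n\le(\sum_l v^l\mu^l+v_a)/\underline c$: each $\Phi(c_n;q)=\frac{1-(1-q\kappa_n/n)^n}{\kappa_n}\le 1/\kappa_n$, so the displayed equation gives $c_n\le(\sum_l v^l\mu^l+v_a)/\kappa_n$, and $c_n\ge\underline c>0$ rearranges to the bound. Thus $\kappa_n$ lies in a compact set; along any convergent subsequence $\kappa_{n_j}\to\kappa^\infty$ one has $(1-q\kappa_{n_j}/n_j)^{n_j}\to e^{-q\kappa^\infty}$, which follows from $n_j\ln(1-q\kappa_{n_j}/n_j)=-q\kappa_{n_j}-O(\kappa_{n_j}^2/n_j)\to-q\kappa^\infty$, the remainder vanishing because $\kappa_{n_j}$ is bounded. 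Passing to the limit and using $c_{n_j}\to\underline c$ yields $\kappa^\infty=\Psi_\infty(\kappa^\infty)$; the value $\kappa^\infty=0$ is excluded because it would force $\underline c=\sum_l v^l\mu^l q^l+v_aq_a$, contradicting $\Delta>0$. Strict concavity of $\Psi_\infty$ makes the positive fixed point unique, so $\kappa^\infty=\kappa^*$; as this holds for every subsequence, $nF(c_n)\to\kappa^*$.

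Part (iii) is then immediate: $P_n(q)=1-(1-qF(c_n))^n=1-(1-q\kappa_n/n)^n$, and the same power limit together with $\kappa_n\to\kappa^*$ from (ii) gives $P_n(q)\to 1-e^{-q\kappa^*}$. I expect the main obstacle to be part (ii): one must simultaneously control the magnitude of $\kappa_n$ and the error in the exponential approximation, and exclude the degenerate limit $\kappa^\infty=0$. The clean a priori bound $\kappa_n\le(\sum_l v^l\mu^l+v_a)/\underline c$, which crucially relies on $\underline c>0$, is what makes the compactness-plus-uniqueness argument go through.
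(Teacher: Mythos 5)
Your proposal is correct, and on the decisive step it takes a genuinely different and more self-contained route than the paper. For part (i) the paper argues via monotonicity of $\Psi_n$ in $n$ (so that $c_n$ is decreasing and bounded away from $\underline{c}$ if it does not converge there), whereas you run a subsequence argument on limit points; both work, and both ultimately exploit the same contradiction $nF(c_n)\to\infty \Rightarrow \Psi(c_n)\to 0$. The real divergence is in part (ii): the paper simply \emph{assumes} that $nF(c_n)$ converges, relegating the justification to a footnote citing a Taylor-expansion argument in an external reference, and then identifies the limit as the fixed point of $\Psi_\infty$. You instead prove convergence from scratch: the a priori bound $nF(c_n)\le(\sum_l v^l\mu^l+v_a)/\underline{c}$ (which is where $\underline{c}>0$ enters) gives compactness, every subsequential limit is shown to be a fixed point of $\Psi_\infty$, the degenerate limit $0$ is excluded under $\Delta>0$, and uniqueness of the positive fixed point (via strict concavity of $\Psi_\infty$, which is the right property here --- the paper's appeal to monotonicity alone would not give uniqueness) forces the whole sequence to converge. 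This buys a self-contained proof and, via your case split on $\Delta=\sum_l v^l\mu^l q^l+v_aq_a-\underline{c}$, an explicit clarification of when the relevant fixed point is $0$ versus positive, a point the paper's statement of \Cref{lemma:limits} glosses over (note $\Psi_\infty(0)=0$ always, so ``the unique fixed point'' implicitly means the positive one under the paper's standing non-degeneracy assumptions). Part (iii) is identical in both treatments.
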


Note that the asymptotic behavior is distribution-free insofar that it does not depend on the distribution of $F$, but only on $\underline{c}$. The reason is that only agents with the highest ability (lowest cost) participate in a public program. The quantity $\kappa^*$ is important as it represents the \textit{asymptotic participation}, i.e., the asymptotic expected number of participants in the crowdsearch. 

\subsection{Optimal Prizes}

Denote by $W_n(\hat{c})$ the designer's objective function \eqref{eq:designerobj} for a particular value of $n$. With \Cref{lemma:limits}, we obtain the designer's objective function at the equilibrium threshold, $c_n$, as $n \rightarrow \infty$.

\begin{corollary}\label{cor:public_utility}
    The designer's objective function at the equilibrium threshold $c_n$ converges:
    \begin{equation*}
        W_n(c_n) \rightarrow W_\infty(\kappa^*) \equiv \sum_l w^l \mu^l  \left(1- e^{-q^l \kappa^*} \right) - \kappa^* \underline{c}, \quad \text{ as } n \rightarrow \infty.
    \end{equation*}
\end{corollary}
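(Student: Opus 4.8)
The plan is to establish the convergence $W_n(c_n) \to W_\infty(\kappa^*)$ by substituting the limits from \Cref{lemma:limits} into the expression for $W_n$ term by term. Recall from \eqref{eq:designerobj} that
\begin{equation*}
W_n(c_n) = \sum_l w^l \mu^l P_n(q^l) - n F(c_n) c_n.
\end{equation*}
The first summand is a finite sum of terms $w^l \mu^l P_n(q^l)$, so by part (iii) of \Cref{lemma:limits} each factor $P_n(q^l) = P(c_n; q^l)$ converges to $1 - e^{-q^l \kappa^*}$. Since the sum is finite and the weights $w^l \mu^l$ are constants, the whole first component converges to $\sum_l w^l \mu^l (1 - e^{-q^l \kappa^*})$.

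The second component, $n F(c_n) c_n$, is the one requiring slightly more care, since it is a product of $n F(c_n)$ and $c_n$, and we must invoke the limits jointly. The plan is to write $n F(c_n) c_n = \bigl(n F(c_n)\bigr) \cdot c_n$ and apply the algebra of limits: by part (ii), $n F(c_n) \to \kappa^*$, and by part (i), $c_n \to \underline{c}$. Because both factors converge to finite limits, their product converges to the product of the limits, namely $\kappa^* \underline{c}$. Combining the two components and using that the limit of a difference is the difference of limits then yields the claimed expression for $W_\infty(\kappa^*)$.

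The main obstacle, such as it is, lies in justifying that the convergence of the product $n F(c_n) c_n$ is legitimate rather than an indeterminate form. Here one should note that $n F(c_n)$ is precisely the quantity whose finite limit $\kappa^*$ is guaranteed by \Cref{lemma:limits}(ii); this is exactly the asymptotic participation that is shown to be well-defined and finite. Thus the product is genuinely $\kappa^* \cdot \underline{c}$ and not a $\infty \cdot 0$ indeterminacy once the convergence of $n F(c_n)$ is taken as given from the preceding lemma. Since \Cref{lemma:limits} supplies all three required limits and the finiteness of the index set $\{1,\dots,L\}$ permits termwise passage to the limit in the first sum, the corollary follows directly as a consequence of that lemma together with the standard continuity of addition, subtraction, and multiplication of convergent sequences. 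No additional estimates or uniformity arguments are needed, so the proof is essentially a careful bookkeeping of the limits already established.
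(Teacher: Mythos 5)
Your proof is correct and matches the paper's approach: the paper simply states that the corollary "follows directly from \Cref{lemma:limits}", and your term-by-term substitution of the three limits (with the product $nF(c_n)\cdot c_n \to \kappa^*\underline{c}$ handled via parts (i) and (ii)) is precisely the bookkeeping that one-line proof leaves implicit.
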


Thus, the designer's payoff only depends on the asymptotic participation. As in the private program, we define the set of achievable $\kappa^*$'s given a budget constraint $\overline{v}$ as 
\begin{equation*}
    K(\overline{v}) \equiv \left \{ \hat{\kappa}:  \hat{\kappa} = \Psi_\infty(\hat{\kappa}; \boldv, v_a, q_a), \; \sum_l v^l + v_a \leq \overline{v}, \; v^l \geq 0, \; v_a \geq 0, \; q_a \in [0,1] \right\}.
\end{equation*}
The following lemma characterizes this set.

\begin{lemma}\label{lemma:achievable}
    Denote by $\kappa_a(\overline{v})$ the unique fixed point of $\overline{v}\frac{1-e^{-\hat{\kappa}}}{\underline{c}}$. Then, $\kappa_a(\overline{v})$ is increasing in $\overline{v}$ and $K(\overline{v})=[0, \kappa_a(\overline{v})]$.
\end{lemma}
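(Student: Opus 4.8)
The plan is to mirror the argument for \Cref{lemma:achievablec}, replacing the finite-$n$ objects by their asymptotic counterparts from \Cref{lemma:limits}. Write $g(\hat{\kappa}) \equiv \overline{v}\frac{1-e^{-\hat{\kappa}}}{\underline{c}}$ for the candidate envelope. I would first record the elementary properties of $g$: it satisfies $g(0)=0$, is strictly increasing and strictly concave (since $g'(\hat{\kappa})=\overline{v}e^{-\hat{\kappa}}/\underline{c}>0$ and $g''<0$), and is bounded above by $\overline{v}/\underline{c}$. Consider $h(\hat{\kappa})\equiv g(\hat{\kappa})-\hat{\kappa}$, which is strictly concave with $h(0)=0$ and $h(\hat{\kappa})\to-\infty$. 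The maintained assumption that the budget suffices to induce participation (so that $K(\overline{v})$ is nondegenerate) gives $g'(0)=\overline{v}/\underline{c}>1$, hence $h'(0)>0$ and $h>0$ just to the right of $0$; strict concavity then forces $h$ to have exactly one positive zero, which I define to be $\kappa_a(\overline{v})$, the unique positive fixed point of $g$. This establishes the first claim.

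For monotonicity in $\overline{v}$, I would invert the fixed-point relation. Writing $\kappa_a=g(\kappa_a)$ as $\phi(\kappa_a)=\overline{v}$ with $\phi(x)\equiv \underline{c}\,x/(1-e^{-x})$, it suffices to show $\phi$ is strictly increasing on $(0,\infty)$. A direct computation gives $\phi'(x)\propto 1-e^{-x}(1+x)$, whose sign is positive because $e^{x}>1+x$ for $x>0$. Hence $\kappa_a(\overline{v})=\phi^{-1}(\overline{v})$ is strictly increasing in $\overline{v}$.

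The substantive step is the characterization $K(\overline{v})=[0,\kappa_a(\overline{v})]$, which I would prove by two inclusions. For $K(\overline{v})\subseteq[0,\kappa_a(\overline{v})]$, observe that every summand of $\Psi_\infty$ is nonnegative and that $\frac{1-e^{-q\hat{\kappa}}}{\underline{c}}$ is increasing in $q$; together with $\mu^l\le 1$ and $q^l,q_a\in[0,1]$ this yields the pointwise envelope bound $\Psi_\infty(\hat{\kappa};\boldv,v_a,q_a)\le\bigl(\sum_l v^l+v_a\bigr)\frac{1-e^{-\hat{\kappa}}}{\underline{c}}\le g(\hat{\kappa})$ for any feasible choice. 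Thus any achievable $\hat{\kappa}$ obeys $\hat{\kappa}=\Psi_\infty(\hat{\kappa})\le g(\hat{\kappa})$, i.e.\ $h(\hat{\kappa})\ge0$, which by the sign analysis of $h$ holds only on $[0,\kappa_a(\overline{v})]$. For the reverse inclusion I would exhibit achieving instruments: fix a target $\hat{\kappa}\in(0,\kappa_a(\overline{v})]$, set all organic prizes to zero, take the artificial complexity $q_a=1$, and choose $v_a=\phi(\hat{\kappa})$. Then $\Psi_\infty(\,\cdot\,)=v_a\frac{1-e^{-(\cdot)}}{\underline{c}}$ has $\hat{\kappa}$ as a fixed point by construction, and feasibility $v_a=\phi(\hat{\kappa})\le\phi(\kappa_a(\overline{v}))=\overline{v}$ follows from the monotonicity of $\phi$; the point $\hat{\kappa}=0$ is achieved trivially by zero prizes. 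This gives both inclusions.

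The only delicate point is the existence and uniqueness of the positive fixed point: because $g(0)=0$ the value $\hat{\kappa}=0$ is always a fixed point, so the statement's ``unique fixed point'' must be read as the unique \emph{positive} one, and its existence genuinely requires the budget assumption to push $g'(0)$ above $1$. The remaining work is the concavity bookkeeping for $h$ and the elementary inequality $e^{x}>1+x$ that drives both the monotonicity of $\phi$ and the envelope comparison. These become routine once the envelope bound $\Psi_\infty\le g$ is in place, which is the conceptual heart: it formalizes that the highest attainable participation is bought by spending the entire budget on a single always-found artificial bug.
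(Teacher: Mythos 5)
Your proof is correct and follows essentially the same route as the paper, which simply notes that $\Psi_\infty$ is increasing in $\hat{\kappa}$ and $q_a$ and defers to the argument for \Cref{lemma:achievablec}: the envelope bound (all budget on a $q_a=1$ artificial bug dominates any feasible $\Psi_\infty$) plus the interval structure. Your additional care — reading ``unique fixed point'' as the unique \emph{positive} one since $g(0)=0$, using concavity rather than mere monotonicity for uniqueness, and inverting via $\phi(x)=\underline{c}\,x/(1-e^{-x})$ to get monotonicity in $\overline{v}$ — tightens details the paper leaves implicit but does not change the approach.
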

Taking the previous lemmata together, the designer's problem becomes
\begin{equation*}
    \max_{\hat{\kappa} \in [0, \kappa_a(\overline{v})]} \sum_l w^l \mu^l  \left(1- e^{-q^l \hat{\kappa}} \right)- \hat{\kappa} \underline{c}.
\end{equation*}

Analogously to the private program, define
\begin{equation*}
    \Omega_{\infty}(\hat{\kappa}) \equiv \sum_l w^l \mu^l q^l e^{-q^l \hat{\kappa}} - \underline{c}.
\end{equation*}
We observe that $\Omega_{\infty}(\hat{\kappa})$ is strictly decreasing and thus has a unique root since $\sum_l w^l \mu^l q^l \geq \underline{c}$. Let $\tilde{\kappa}$ be its unique root, which is decreasing in $w^l$ and $\mu^l$. We obtain the following characterization.

\begin{theorem}\label{thm:public_opt}
    The optimal asymptotic participation is $\hat{\kappa}^*=\min\{\tilde{\kappa}, \kappa_a(\overline{v})\}$ and any set of prizes $\boldv^\infty$, $v_a^\infty$ and complexity $q_a^\infty$ that solve $\hat{\kappa}^* = \Psi_\infty(\hat{\kappa}^*; \boldv, v_a, q_a)$ is optimal.
\end{theorem}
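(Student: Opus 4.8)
The plan is to mirror the structure used in the private program (Theorem 1) but now working with the limiting objects from Lemma 5 and Corollary 1. The designer's problem has been reduced to maximizing the concave-in-flavor objective
\[
	W_\infty(\hat{\kappa}) = \sum_l w^l \mu^l \left(1 - e^{-q^l \hat{\kappa}}\right) - \hat{\kappa}\,\underline{c}
\]
over the achievable interval $\hat{\kappa} \in [0, \kappa_a(\overline{v})]$, where the interval characterization comes from Lemma 6. First I would compute the derivative $W_\infty'(\hat{\kappa}) = \sum_l w^l \mu^l q^l e^{-q^l \hat{\kappa}} - \underline{c} = \Omega_\infty(\hat{\kappa})$, identifying the first-order condition with the function $\Omega_\infty$ defined just before the theorem. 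Since the excerpt already records that $\Omega_\infty$ is strictly decreasing with a unique root $\tilde{\kappa}$ (using $\sum_l w^l \mu^l q^l \geq \underline{c}$), this immediately gives that $W_\infty$ is strictly concave: $W_\infty' = \Omega_\infty$ is strictly decreasing, so $W_\infty'' < 0$.

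Given strict concavity, the unconstrained maximizer of $W_\infty$ is exactly $\tilde{\kappa}$, the root of $\Omega_\infty$, and $W_\infty$ is strictly increasing to the left of $\tilde{\kappa}$ and strictly decreasing to its right. The next step is the standard projection onto the feasible interval $[0, \kappa_a(\overline{v})]$. If $\tilde{\kappa} \leq \kappa_a(\overline{v})$, then the interior maximizer is feasible and optimal, so $\hat{\kappa}^* = \tilde{\kappa}$. If instead $\tilde{\kappa} > \kappa_a(\overline{v})$, then $W_\infty$ is still strictly increasing throughout $[0, \kappa_a(\overline{v})]$ (because this whole interval lies to the left of the unconstrained peak), so the constrained maximum is attained at the right endpoint, giving $\hat{\kappa}^* = \kappa_a(\overline{v})$. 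Combining the two cases yields $\hat{\kappa}^* = \min\{\tilde{\kappa}, \kappa_a(\overline{v})\}$.

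It remains to verify the implementability claim: that any prize profile $(\boldv^\infty, v_a^\infty, q_a^\infty)$ solving $\hat{\kappa}^* = \Psi_\infty(\hat{\kappa}^*; \boldv, v_a, q_a)$ within the budget is optimal. Here I would appeal to Lemma 6, which established that $K(\overline{v}) = [0, \kappa_a(\overline{v})]$ is exactly the set of asymptotic participation levels that are achievable under the budget; since $\hat{\kappa}^* \in [0, \kappa_a(\overline{v})] = K(\overline{v})$, such an implementing profile exists, and by Corollary 1 the designer's asymptotic payoff depends only on the induced $\kappa^*$. Hence every feasible way of inducing $\hat{\kappa}^*$ delivers the same optimal value $W_\infty(\hat{\kappa}^*)$, and is therefore optimal. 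I expect the only genuinely delicate point to be confirming strict concavity cleanly from the monotonicity of $\Omega_\infty$ rather than re-differentiating term by term, and ensuring the endpoint argument in the constrained case is justified by the location of $\tilde{\kappa}$ relative to the feasible interval; everything else is a direct transcription of the private-program argument into the $n \to \infty$ limit.
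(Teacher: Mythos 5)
Your proposal is correct and follows essentially the same route as the paper: identify $W_\infty'=\Omega_\infty$, use the assumption $\sum_l w^l\mu^l q^l\geq\underline{c}$ and the strict monotonicity of $\Omega_\infty$ to get a unique root $\tilde{\kappa}$ that is the unconstrained maximizer, then project onto the achievable interval $[0,\kappa_a(\overline{v})]$ from the achievability lemma and invoke the fact that the payoff depends only on the induced participation level. The paper's proof is terser (it delegates the projection and implementability steps to the proof of the private-program theorem), but the content is identical.
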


\subsection{Verification}

We now verify that the designer's optimal prizes and complexities in a game with finite agents indeed converge to the solution obtained in \Cref{thm:public_opt}. Denote the set of optimal prizes and complexities in a private program with $n$ agents by 
\begin{equation*}
M_n = \{(\boldv, v_a, q_a) : \hat{c}^* = \Psi(\hat{c}^*;\boldv,v_a,q_a) \}
\end{equation*} 
and the set of optimal prizes and complexities in a public program by 
\begin{equation*}
M_\infty = \{(\boldv, v_a, q_a) : \hat{\kappa}^* = \Psi_\infty(\hat{\kappa}^*;\boldv,v_a,q_a) \}.
\end{equation*}
To show that the two sets asymptotically coincide, we introduce the Hausdorff distance which intuitively describes the maximum distance from points in any two sets.

\begin{definition}\label{def:hausdorff}
    Let $A, B \subseteq \mathbb{R}^k$ be two non-empty subsets. The Hausdorff distance induced by the Euclidean norm $\| \cdot\|_2$ is given by
    \begin{equation*}
        d(A, B)=\max \left \{ \sup_{x \in A} \inf_{y \in B} \|x - y\|_2 , \sup_{x \in B} \inf_{y \in A} \|x - y\|_2 \right \}.
    \end{equation*}
\end{definition}

Equipped with this, we state our next theorem.

\begin{theorem}\label{thm:limit_set}
    It holds that $d(M_n, M_\infty) \rightarrow 0$ as $n \rightarrow \infty$.
\end{theorem}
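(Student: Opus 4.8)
\section*{Proof proposal for \Cref{thm:limit_set}}

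The plan is to reduce both $M_n$ and $M_\infty$ to zero-level sets of explicit functions on a common compact domain, show these functions converge uniformly, and then upgrade uniform convergence of the functions to Hausdorff convergence of their zero sets. Throughout I read $M_n$ and $M_\infty$ as sets of \emph{optimal}---hence \emph{feasible}---profiles, i.e.\ as intersections with the compact set $\mathcal F=\{(\boldv,v_a,q_a): v^l\ge0,\ v_a\ge0,\ q_a\in[0,1],\ \sum_l v^l+v_a\le\overline v\}$; both are nonempty by \Cref{thm:opt} and \Cref{thm:public_opt}. This compactness is essential: the bare fixed-point surfaces are unbounded, and two hypersurfaces with slightly different normal directions have infinite Hausdorff distance.

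First I would control the target thresholds. Writing $\kappa_n^*:=nF(\hat c^*(n))$ and using that $nF$ is increasing, so that it commutes with the minimum in $\hat c^*(n)=\min\{\tilde c(n),c_a(\overline v,n)\}$, I would show $\kappa_n^*\to\hat\kappa^*=\min\{\tilde\kappa,\kappa_a(\overline v)\}$. The convergence $nF(c_a(\overline v,n))\to\kappa_a(\overline v)$ is \Cref{lemma:limits} applied to the profile placing the whole budget on an artificial bug with $q_a=1$ (cf.\ \Cref{lemma:achievablec} and \Cref{lemma:achievable}). For $nF(\tilde c(n))\to\tilde\kappa$ I would argue from the fixed-point equation $\tilde c+F(\tilde c)/f(\tilde c)=\sum_l w^l\mu^l q^l(1-q^lF(\tilde c))^{n-1}$: the right-hand side collapses to $0$ pointwise on $(\underline c,\overline c]$ while at $\underline c$ it equals $\sum_l w^l\mu^l q^l\ge\underline c$, so the crossing satisfies $\tilde c(n)\to\underline c$; writing the bracket as $(1-q^l\lambda_n/(n-1))^{n-1}$ with $\lambda_n:=(n-1)F(\tilde c(n))$ and passing to the limit forces any limit point of $\lambda_n$ to solve $\Omega_\infty=0$, i.e.\ to equal $\tilde\kappa$. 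This mirrors the proof of \Cref{lemma:limits}.

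Next I would rewrite the defining equations in a common form. Multiplying $\hat c^*(n)=\Psi(\hat c^*(n);\boldv,v_a,q_a)$ by $\kappa_n^*$ and using $\Phi(\hat c;q)=\big(1-(1-qF(\hat c))^n\big)/(nF(\hat c))$ together with $(1-qF(\hat c^*(n)))^n=(1-q\kappa_n^*/n)^n$, the set $M_n$ becomes the zero set in $\mathcal F$ of $G_n(\boldv,v_a,q_a):=\sum_l v^l\mu^l\big(1-(1-q^l\kappa_n^*/n)^n\big)+v_a\big(1-(1-q_a\kappa_n^*/n)^n\big)-\hat c^*(n)\kappa_n^*$, while $M_\infty$ is the zero set of $G_\infty(\boldv,v_a,q_a):=\sum_l v^l\mu^l(1-e^{-q^l\hat\kappa^*})+v_a(1-e^{-q_a\hat\kappa^*})-\underline c\hat\kappa^*$. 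Because $\kappa_n^*\to\hat\kappa^*$ and $(1-a/n)^n\to e^{-a}$ uniformly on bounded $a$-intervals, and since $\hat c^*(n)\kappa_n^*\to\underline c\hat\kappa^*$, I get $\sup_{\mathcal F}|G_n-G_\infty|\to0$; moreover $G_n$ is affine in the prizes with leading coefficients converging to strictly positive limits.

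Finally I would upgrade this to $d(M_n,M_\infty)\to0$. The direction $\sup_{x\in M_n}\inf_{y\in M_\infty}\|x-y\|_2\to0$ is routine: a subsequence escaping a neighborhood of $M_\infty$ has, by compactness of $\mathcal F$, a limit $x$, and uniform convergence plus continuity give $G_\infty(x)=0$, a contradiction. The reverse direction $\sup_{y\in M_\infty}\inf_{x\in M_n}\|x-y\|_2\to0$ is the main obstacle, since for each $y\in M_\infty$ I must produce a nearby \emph{feasible} point of $M_n$. As $\sup_{y\in M_\infty}|G_n(y)|=\sup_{y\in M_\infty}|G_n(y)-G_\infty(y)|\to0$, it suffices to move $y$ a distance $O(|G_n(y)|)$ along a feasible direction on which $G_n$ has derivative bounded away from $0$, and solve $G_n=0$. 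For $y$ interior to $\mathcal F$ this is immediate (vary $v^1$, whose coefficient is bounded below). The delicate case is $y$ on the budget face: if $G_n(y)>0$ I decrease a positive prize, and if $G_n(y)<0$ I raise the effective value at fixed budget by reallocating from a less- to a more-efficient instrument (or raising $q_a$ toward $1$). Here I would split on the regime. When the budget binds ($\hat\kappa^*=\kappa_a(\overline v)$) the maximizing profiles coincide exactly for all $n$, because the efficiency ordering of instruments---organic bug $l$ matches the artificial bug with $q_a=1$ iff $\mu^l=q^l=1$---is independent of $n$, so $M_n=M_\infty$ and the distance is $0$. When the budget does not bind, every boundary point of $M_\infty$ is strictly inefficient, so the per-unit efficiency gap is a continuous positive function on the compact set $M_\infty\cap\partial\mathcal F$, hence bounded below, furnishing the required non-degenerate feasible direction uniformly. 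Combining both directions gives $d(M_n,M_\infty)\to0$.
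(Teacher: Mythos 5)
Your strategy is genuinely different from the paper's. The paper first proves the preliminary fact you also need ($nF(\hat c^*_n)\to\hat\kappa^*$, its \Cref{lemma:limmit_kappa}) and then matches the two sets coordinate-by-coordinate: for a point of $M_n$ it solves the defining equation of $M_\infty$ for $v^1$ alone, keeping every other coordinate fixed, and shows $|v^1-v^1_n|\to 0$ uniformly from the uniform convergence of $(1-x)^n$ to $e^{-nx}$; feasibility is then restored at the end by intersecting with the constraint set. Your route---realising both sets as zero sets of $G_n$ and $G_\infty$ on the compact feasible set $\mathcal F$, proving $\sup_{\mathcal F}|G_n-G_\infty|\to 0$, and upgrading to Hausdorff convergence---is a different packaging of the same analytic core, and your forward direction (compactness plus a subsequence argument) is clean and complete. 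You have also correctly located the real difficulty, namely producing a \emph{feasible} point of $M_n$ near each point of $M_\infty$, which the paper's proof largely elides.

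Your resolution of that step, however, has a gap. In the non-binding regime, take $y\in M_\infty$ on the budget face with $G_n(y)<0$, so you must \emph{increase} the value of $G_n$ at full budget. You propose reallocating toward a more efficient instrument or raising $q_a$, with a move of size $O(|G_n(y)|)$ justified by a ``per-unit efficiency gap bounded below on $M_\infty\cap\partial\mathcal F$.'' That bound shows some instrument carrying mass is inefficient relative to the $q_a=1$ benchmark, but not that a \emph{strictly more efficient} instrument is reachable by a small perturbation of $y$ itself. Concretely, if $v_a=0$, $q_a$ is small, and the whole budget sits on the single most efficient organic bug (possible exactly when $\tilde\kappa=\kappa_0(\overline v)$, the knife edge of \Cref{thm:public_artificial_bug}), then every profile within distance $\varepsilon$ of $y$ delivers at most $\overline v\,\mu^l\Phi(\cdot;q^l)+o(1)$, so whenever $G_n(y)<0$ one has $\inf_{x\in M_n}\|x-y\|_2$ bounded below by a constant (the $q_a$-coordinate must jump). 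Your claimed $O(|G_n(y)|)$ correction is therefore false in that configuration, and the sign of $G_n(y)$ is not controlled by your argument. Likewise, ``$M_n=M_\infty$ when the budget binds'' requires the finite-$n$ and limit problems to be in the binding regime simultaneously; this follows for large $n$ when $\tilde\kappa>\kappa_a(\overline v)$ strictly, but is asserted rather than proved and is again delicate at $\tilde\kappa=\kappa_a(\overline v)$. To be fair, the paper's own construction (solving for $v^1$) can also exit the budget set in exactly these knife-edge situations and dismisses the issue in one sentence; but since you flagged the boundary treatment as the crux of your proof, the argument you give for it does not close the case it was introduced to handle.
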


Hence, as $n$ increases, the maximal distance for any solution of the private program comes arbitrarily close to a solution of the public program.

\subsection{Usefulness of Artificial Bugs}

Analogously to \Cref{thm:artificialbug}, we identify when inserting an artificial bug is useful in a public program. Define $\kappa_0(\overline{v})=\max_l \kappa_l(\overline{v})$, where $\kappa_l(\overline{v})$ the fixed point of $v^l \mu^l \frac{1-e^{-q^l \kappa}}{\underline{c}}$.

\begin{theorem}\label{thm:public_artificial_bug}
    In the public program, inserting an artificial bug is beneficial to the designer if and only if $\tilde{\kappa} > \kappa_0(\overline{v})$. This is the case if and only if there exists $N\in \mathbb{N}$ such that for all $n \geq N$, it is beneficial to insert an artificial bug in the private program with $n$ agents.
\end{theorem}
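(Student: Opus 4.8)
The plan is to prove the two equivalences separately: first the ``static'' characterization that in the public program an artificial bug helps exactly when $\tilde\kappa>\kappa_0(\overline v)$, and then the ``asymptotic'' bridge linking this condition to the private programs with $n$ agents. Both parts rest on the fact that the relevant objectives are single-peaked and that the achievable sets are nested intervals, so the whole argument parallels \Cref{thm:artificialbug}.

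For the first equivalence I would argue as in \Cref{thm:artificialbug}. By \Cref{cor:public_utility} the designer maximizes $W_\infty(\hat\kappa)=\sum_l w^l\mu^l(1-e^{-q^l\hat\kappa})-\hat\kappa\underline c$, whose derivative is $\Omega_\infty$; since $\Omega_\infty$ is strictly decreasing with unique root $\tilde\kappa$, $W_\infty$ is strictly increasing on $[0,\tilde\kappa]$ and strictly decreasing afterwards. Without an artificial bug the feasible set is $[0,\kappa_0(\overline v)]$ and with one it is $[0,\kappa_a(\overline v)]$ by \Cref{lemma:achievable}; comparing the defining fixed-point maps pointwise (using $\mu^l\le1$ and $q^l\le1$) gives $\kappa_a(\overline v)\ge\kappa_0(\overline v)$, with strict inequality because the artificial bug carries no existence discount and can be assigned complexity one. (I would flag the knife-edge in which some organic bug already has $\mu^l=q^l=1$, where the two coincide and the artificial bug is redundant.) Then, invoking \Cref{thm:public_opt}, if $\tilde\kappa\le\kappa_0(\overline v)$ both constrained optima equal $\tilde\kappa$ and the artificial bug is useless, whereas if $\tilde\kappa>\kappa_0(\overline v)$ the optimum with an artificial bug is $\min\{\tilde\kappa,\kappa_a(\overline v)\}>\kappa_0(\overline v)$, which by strict monotonicity of $W_\infty$ on $[\kappa_0(\overline v),\tilde\kappa]$ yields strictly higher value. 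This gives the first ``if and only if''.

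For the second equivalence, write $\tilde c_n$ and $c_{0,n}(\overline v)$ for the private-program quantities $\tilde c$ and $c_0(\overline v)$ to emphasize their dependence on $n$. By \Cref{thm:artificialbug} the artificial bug is beneficial in the $n$-agent program iff $\tilde c_n>c_{0,n}(\overline v)$, and since $F$ is strictly increasing this is equivalent to $nF(\tilde c_n)>nF(c_{0,n}(\overline v))$. The core step is to show $nF(\tilde c_n)\to\tilde\kappa$ and $nF(c_{0,n}(\overline v))\to\kappa_0(\overline v)$. I would obtain these exactly as in \Cref{lemma:limits} and the verification behind \Cref{thm:limit_set}: substituting $\kappa_n=nF(\tilde c_n)$ into the defining relation $\tilde c_n=\Omega(\tilde c_n)$ and letting $n\to\infty$ (using $\tilde c_n\to\underline c$, hence $F(\tilde c_n)/f(\tilde c_n)\to0$, and $(1-q^lF(\tilde c_n))^{n-1}\to e^{-q^l\tilde\kappa}$) collapses the equation to $\Omega_\infty(\tilde\kappa)=0$, which defines $\tilde\kappa$; the same substitution in $c^l_n=\overline v\mu^l\Phi(c^l_n;q^l)$, together with \Cref{lemma:keyrelation} and $c_{0,n}=\max_l c^l_n$, gives $nF(c_{0,n}(\overline v))\to\max_l\kappa_l(\overline v)=\kappa_0(\overline v)$. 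Granting these limits, if $\tilde\kappa>\kappa_0(\overline v)$ then eventually $nF(\tilde c_n)>nF(c_{0,n}(\overline v))$, so the private program is eventually beneficial; and if $\tilde\kappa<\kappa_0(\overline v)$ the reverse holds and it is eventually not beneficial, so a fortiori not ``eventually always'' beneficial.

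The main obstacle is the knife-edge $\tilde\kappa=\kappa_0(\overline v)$, where the two leading-order limits coincide, so the sign of $\tilde c_n-c_{0,n}(\overline v)$ is governed by $O(1/n)$ corrections rather than by the limits themselves. Taking limits in the hypothesis ``eventually always beneficial'' only delivers the weak inequality $\tilde\kappa\ge\kappa_0(\overline v)$, and upgrading it to the strict inequality required for the equivalence is precisely what the boundary leaves undetermined. I would resolve this either by observing that $\tilde\kappa=\kappa_0(\overline v)$ is a non-generic condition on $(\boldw,\boldmu,\boldq,\overline v,\underline c)$ and excluding it, or by carrying out a second-order expansion of $nF(\tilde c_n)-nF(c_{0,n}(\overline v))$ in $1/n$ to show that at the boundary the difference is non-positive infinitely often, so that the private program is not eventually always beneficial and the contrapositive of the backward direction goes through. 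Everything else is routine given the single-peakedness and the convergence lemmas already established in the paper.
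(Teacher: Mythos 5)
Your proposal follows essentially the same route as the paper: the first equivalence is argued exactly as in \Cref{thm:artificialbug} via the single-peakedness of $W_\infty$ and the nested achievable intervals $[0,\kappa_0(\overline v)]\subseteq[0,\kappa_a(\overline v)]$, and the second equivalence rests on the limits $nF(\tilde c_n)\rightarrow\tilde\kappa$ and $nF(c_{0,n}(\overline v))\rightarrow\kappa_0(\overline v)$ obtained from \Cref{lemma:limits}. The knife-edge $\tilde\kappa=\kappa_0(\overline v)$ that you flag is a genuine issue which the paper's own proof passes over silently (it asserts that strict inequality of the limits is \emph{equivalent} to eventual strict inequality of the sequences, which holds only in the forward direction), so your explicit handling of that boundary case is, if anything, more careful than the published argument.
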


Before ending this section, a remark on the cost distribution is in order. Throughout this section, we kept the distribution $F$ of the cost of the agents fixed. However, the distribution can plausibly vary, depending on the size of the crowd, $F_n$. Because with increasing $n$, less selection of the agents can occur until the whole population is open to participating, it is safe to assume that $F_n$ converges to some distribution. By \Cref{lemma:limits}, the limit case does not depend on the distribution, and all the results still hold, where the distribution $F_n$ converges to some limit distribution $F_\infty$.


\section{Numerical Examples} \label{sec:example}

We next illustrate the results with numerical examples and visualize the effects of an artificial bug and the dependence on the budget constraint.

\subsection{Private Program}

We start with the private program. Consider the uniform distribution on $[0,1]$ and assume that there is one type of organic bug that exists with probability $\mu=\frac{1}{2}$, has complexity $q=\frac{1}{2}$, and that finding the bug derives a utility of $w=2$ for the designer. For simplicity, we assume $n=2$ agents. Then, the optimal threshold given by the fixed point of \eqref{eq:foc} is $\tilde{c}=\frac{2}{9}$. The critical value depending on the budget constraint $\overline{v}$ is $c_0(\overline{v})=\left(\frac{4}{\overline{v}}+\frac{1}{4} \right)^{-1}$. Thus, the condition $\tilde{c}>c_0(\overline{v})$ from \Cref{thm:artificialbug} holds if and only if $\overline{v}<\frac{16}{17}$, i.e. inserting an artificial bug is beneficial if and only if the budget is constrained by at least $\frac{16}{17}$.

For example, if $\overline{v}=\frac{1}{2}$ and if we do not allow the insertion of any artificial bug, the maximum is attained at $c_0(\overline{v})=\frac{4}{33}$, leading to a utility of $\frac{32}{363} \approx 0.088$ for the designer. With an artificial bug, we can attain the optimal threshold of $\tilde{c}=\frac{2}{9}$ by choosing, for example, $v=\frac{4}{14}$, $v_a=\frac{3}{14}$ and $q_a=1$, so that we obtain a utility of $\frac{1}{9} \approx 0.111 $.

Next, we illustrate graphically how inserting an artificial bug lowers the designer's financial commitment, in other words allows spending less on the prizes $v$ and $v_a$. Consider one organic and one artificial bug. In \Cref{fig:q_a}, we fix $\mu$, $q$, and plot the optimal prize sets $\left \{(v, v_a): \tilde{c}=v \mu \Phi(\tilde{c}; q) + v_a \Phi(\tilde{c};q_a) \right \}$
for different $q_a$'s on which the maximal utility is achieved, i.e., at the optimal equilibrium threshold, $\tilde{c}$. The black line labeled $v + v_a = \overline{v}$ represents the budget constraint for $\overline{v}=\frac{2}{3}$. Therefore, the optimal equilibrium threshold is achievable if the prizes $(v, v_a)$ lie within the budget set. Observe that having a less complex artificial bug, i.e., higher $q_a$, leads to a larger portion of the optimal prize set being within the budget set. In contrast, if the artificial bug is too complex to find, for example, if $q_a=\frac{1}{5}$, we cannot achieve the optimal threshold.

\begin{figure}[ht]
    \centering
    \begin{minipage}{.5\textwidth}
        \centering
        \includegraphics[width=1\linewidth]{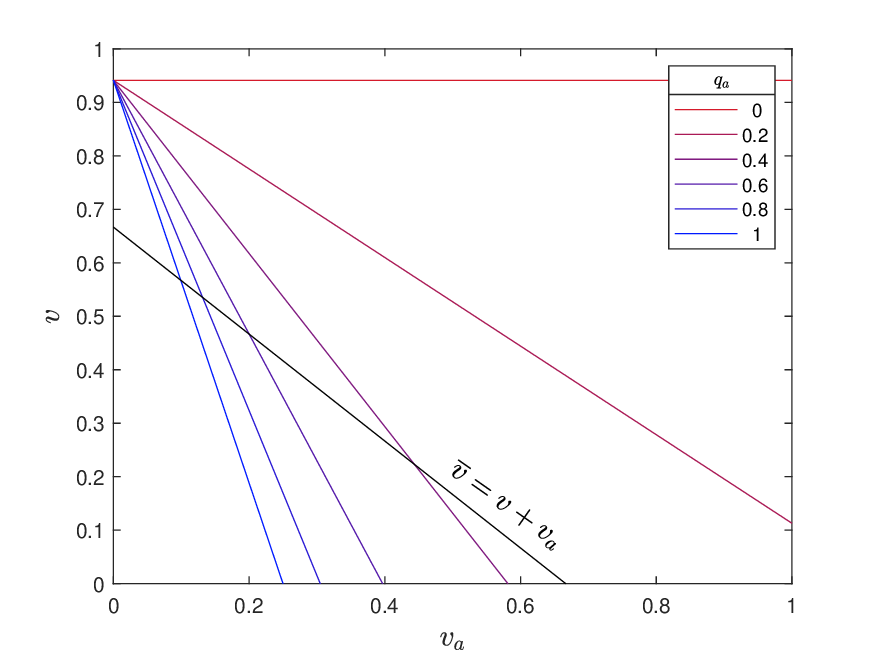}
        \captionof{figure}{Optimal $(v, v_a)$ given $q_a$}
        \label{fig:q_a}
    \end{minipage}%
    \begin{minipage}{.5\textwidth}
        \centering
        \includegraphics[width=1\linewidth]{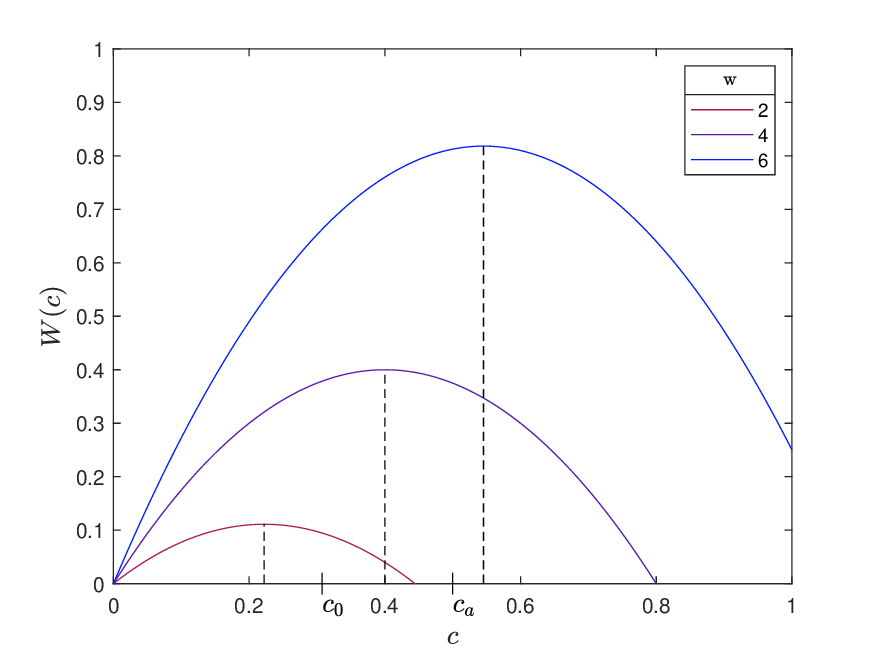}
        \captionof{figure}{Dependence on $w$}
        \label{fig:w}
    \end{minipage}
\end{figure}

In \Cref{fig:w}, we plot the utility $W(\hat{c})$ of the designer against the thresholds $\hat{c}$ for different $w$'s. Observe that if we do not allow any artificial bugs, we can achieve only thresholds in $[0, c_0(\overline{v})] \approx [0, 0.308]$, whereas an artificial bug helps to achieve thresholds up to $c_a(\overline{v})=\frac{1}{2}$. For low $w$ (take, for example, $w=2$), inserting an artificial bug does not help to achieve the maximum. If we increase $w$, an artificial bug is needed to achieve the maximal utility (say $w=4$). Increasing $w$ further ($w=6$) can lead to the case that even when inserting an artificial bug, we are unable to achieve the maximal utility. Nevertheless, inserting an artificial bug leads to a greater utility than without one. Note that we obtain the same observations if we vary the probability $\mu$ that a bug exists instead of $w$.

\subsection{Public Program}

To illustrate results for the public program, consider the uniform distribution on $[1,2]$. As before, we assume that there is one type of bug that exists with probability $\mu=\frac{1}{2}$, has complexity $q=\frac{1}{2}$, and that finding it derives a utility of $w=10$ for the designer. The budget constraint is set to $\overline{v}=5$. \Cref{fig:utility_convergence} is a visualization of the convergence of the designer's objective function $W_n(\hat{c})$ as $n \rightarrow \infty$ as seen in \Cref{cor:public_utility}. Observe that the larger the number of agents, the smaller the optimal threshold $c_n$ as $c_n \rightarrow \underline{c} = 1$ from \Cref{lemma:limits}. \Cref{fig:utility_convergence_scaled} depicts the same plot, but with the $x$-axis scaled by $nF(\hat{c})$, to visualize the convergence of $nF(c_n) \rightarrow \kappa^*$. Note that we can also see the convergence of the designer's objective function as $n \rightarrow \infty.$

\begin{figure}[ht]
    \centering
    \begin{minipage}{0.5\textwidth}
        \centering
        \includegraphics[width=1\linewidth]{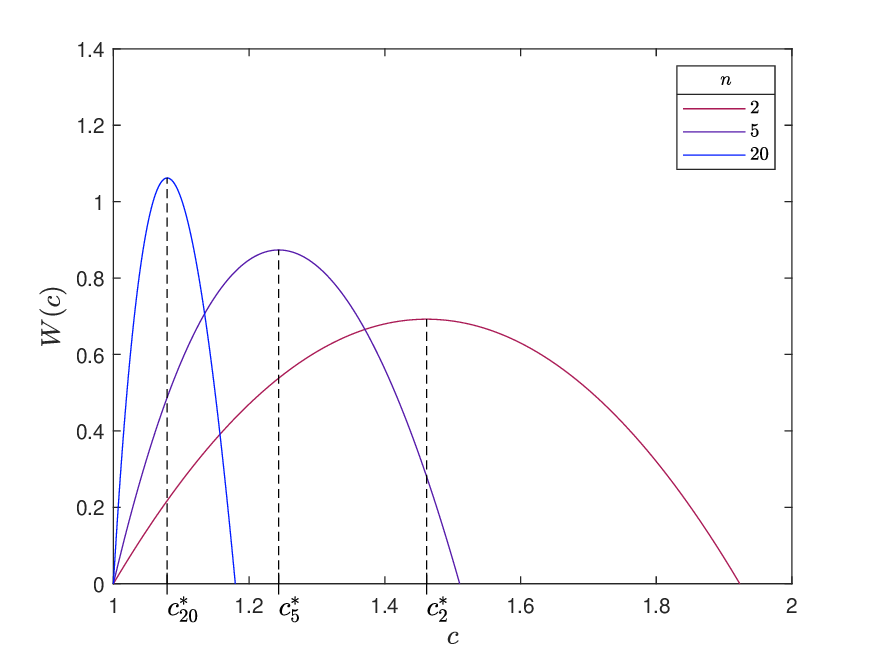}
        \captionof{figure}{Convergence of $W_n(\hat{c})$ as $n$ grows}
        \label{fig:utility_convergence}
    \end{minipage}%
    \begin{minipage}{0.5\textwidth}
        \centering
        \includegraphics[width=1\linewidth]{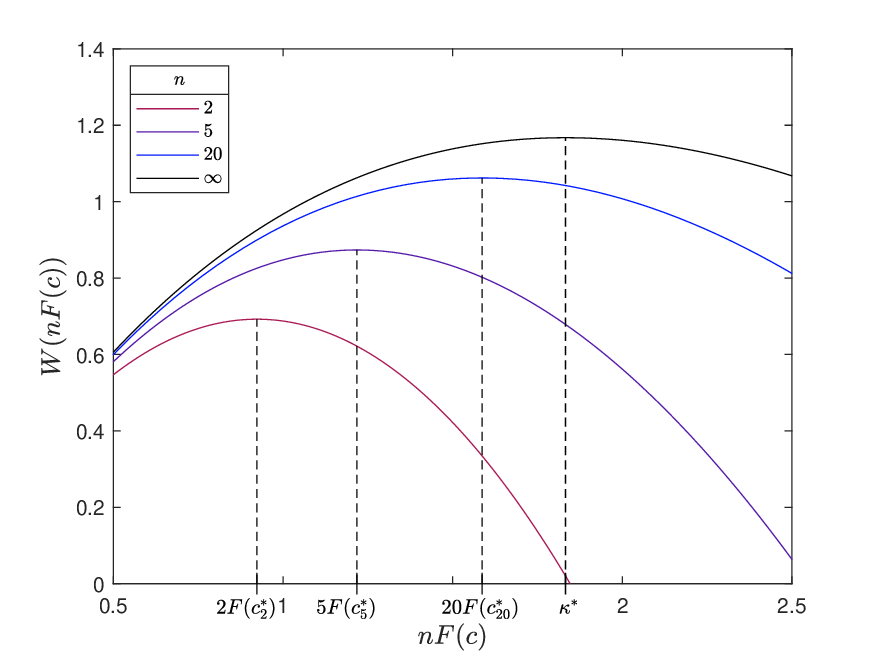}
        \captionof{figure}{Scaled version}
        \label{fig:utility_convergence_scaled}
    \end{minipage}%
\end{figure}

The effect of the budget constraint $\overline{v}$ and the artificial bug $q_a$ in the public program is analogous to those in the private program. We therefore omit its visualization and refer to \Cref{fig:q_a} and \Cref{fig:w} in the previous section. 

Instead, we illustrate the convergence of the set of solutions for the private program with $n$ agents to the solutions of the public program from \Cref{thm:limit_set}. In \Cref{fig:set_convergence}, we show projections of the sets $M_n$ and $M_\infty$ for a fixed $q_a \in \left\{ \frac{1}{3}, \frac{1}{2}, 1 \right\}$. We observe that the Hausdorff distance decreases as $n$ increases. Another interesting observation is that with artificial bugs that are less complex to find, the designer gains more flexibility in setting the prizes in the sense that the set of optimal prizes enlarges and allows less spending.

\begin{figure}[h]
    \makebox[\textwidth][c]{\includegraphics[width=1.2\textwidth]{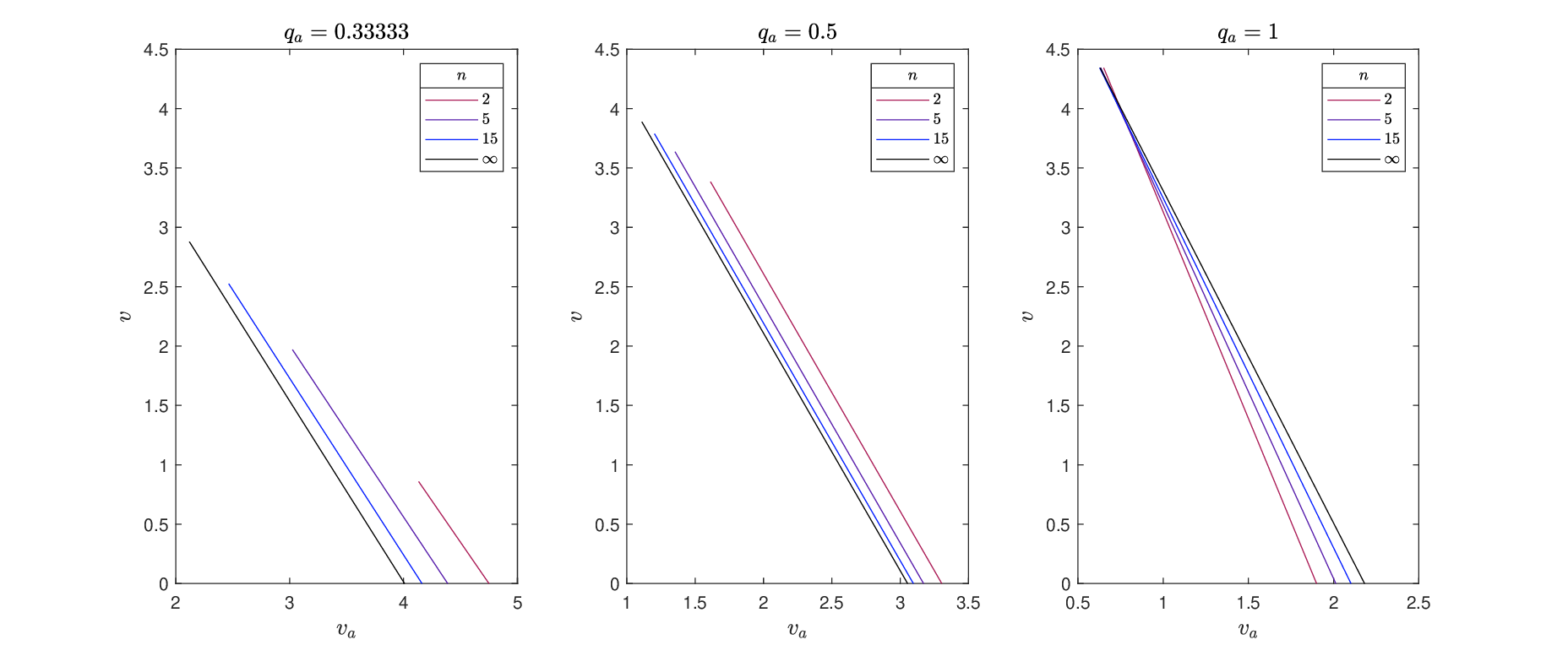}}%
    \captionof{figure}{Convergence of the sets $M_n$ to $M_\infty$}
    \label{fig:set_convergence}
\end{figure}


\section{Implementation} \label{sec:implement}

The results in the previous sections show that artificial bugs are a welcome instrument for the designer of bug bounty programs, especially when finding organic bugs is important for the designer. Using artificial bugs in the design of bug bounty programs does not pose any particular complexity from a technical perspective, as one can easily introduce simpler or more sophisticated bugs in software packages. Yet, as prizes paid for finding artificial and organic bugs may optimally differ, the designer may want to prove to the finders of the artificial bug, or even to all participants, that an artificial bug found was indeed inserted on purpose and was artificially designed by the designer at the start of the bug bounty program. Even more importantly, if the artificial bug is not found during the crowdsearch, it is important that the designer can prove that an artificial bug has been inserted before the crowdsearch started. This would ensure, or reaffirm, the credibility of the bug bounty program with artificial bugs. We outline three approaches that could be used to achieve this objective: encryption, commitment scheme, and zero-knowledge (ZK) proofs.\footnote{Of course, the credibility of inserting artificial bugs could also be ensured by traditional means. For instance, the designer could invite a notary, who confirms in a written statement that an artificial bug has been inserted.}

Let us start with asymmetric encryption (see, e.g., \citet{stallings2020}). The original block of the code and the modified code block with the artificial bug could be encrypted before the crowdsearch starts, and participants can decrypt it once the crowdsearch has ended. More specifically, the designer and participants are connected to a trusted entity that generates a public key and a private key for each participant. The public key information is stored in a directory. Before the crowdsearch starts, the designer encrypts both the original and the modified block of code with the artificial bug using the participant's public key and sends the encrypted message to all participants. Once the crowdsearch competition has ended, the participants receive their private keys to decrypt the message and verify the existence of the artificial bug.

This approach also allows the artificial bug to only exist with a particular probability, which is common knowledge to the designer and all participants before the crowdsearch starts. The trusted entity flips a biased coin and, based on the outcome of the coin, sends a message to the designer about whether to insert a bug or not. In case no bug should be introduced, the private keys will be empty and not be usable to decrypt any message. This shows to the participants that no artificial bug has been inserted. We note that such an approach is incentive-compatible for the bug bounty designer. If no artificial bug should be inserted, there is also no benefit in inserting such a bug secretly, as such a bug would be treated as an organic bug. If detected, the designer has to pay, and thus, inserting a bug would only be worse off, as the incentive to search for bugs is unaffected.

While the bug bounty platform is a natural candidate for the trusted entity, an alternative approach that does not rely on having a third party is for the designer to use a commitment scheme \citep{goldreich2001}. Specifically, before the crowdsearch starts, the designer computes a commitment over the modified block of code with the artificial bug. The designer then publishes the commitment without revealing any further information about the artificial bug (\textit{hiding} property). Once the competition has ended, the designer then opens the commitment and proves to the participants that the designer knows that the modified code block contains an artificial bug (\textit{binding} property).

Inserting an artificial bug with a certain probability can also be implemented with the commitment scheme but needs a public source of randomness and thus ultimately also a trusted third party (see \citet{bonneau2022} for an overview). To implement this, the designer commits to a private key and flips a coin based on it and a public source of randomness. At the end of the crowdsearch, the designer opens the commitment and allows participants to verify the result of the coin flips and that the protocol was followed.

A third approach to reveal an artificial bug convincingly to the participants\textemdash irrespective of whether it has been found or not\textemdash is to use zero-knowledge (ZK) proofs. The idea is that before the start of a crowdsearch, the designer demonstrates the existence of a artificial bug in the software, without revealing the location and underlying technique of the bug. Hence, the designer can convince the participants of the artificial bug's existence without giving any hint on how to find it.

While this concept is attractive in theory, proving bugs in ZK in practical software requires solving a variety of problems and engineering challenges. As pointed out by \citet{cuellar2023}, ZK frameworks must be able to compile proofs of bugs that require many steps of execution. Moreover, it is necessary to efficiently create understandable statements. Yet, recent advances in the development of proof-statement compilers show promising progress as to how the existence of a bug can be proved without revealing the details or the inputs that could be used to exploit the bug.\footnote{See \citet{cuellar2023,green2023} and the survey on the applicability of ZKs in various cases, \citet{ernstberger2024}.} For our problem, a ZK proof is applicable, as the bug can be chosen to be particularly suitable to the applications of the proof-statement compiler and supporting tools described in \citet{cuellar2023} and \citet{green2023}. Hence, the inserted bug can be chosen such that a ZK proof is available. Despite its requirements for sophisticated compilers, a key benefit of ZK proofs over the previous two approaches is that participants can verify \textit{at the start} of the crowdsearch that an artificial bug exists rather than \textit{at the end}.

While the three presented approaches can provide an implementation of artificial bugs as envisioned, which approach is most useful will depend on the specific applications, as well as on the reputation and engineering capabilities of the bug bounty designer.


\section{Other Benefits of Artificial Bugs} \label{sec:discuss}

Our paper suggests that inserting artificial bugs can be useful to the designer of bug bounty programs. We have examined one such benefit, showing that artificial bugs allow the designer to incentivize participants at a lower cost. We conclude the paper by discussing other potential benefits of artificial bugs.

First, artificial bugs can help screen invalid submissions, which saves the organization costly resources on triage and verification. Bug bounty programs typically attract large numbers of invalid submissions as security researchers incur relatively small marginal costs to submit a report. \citet{zhao2017} report that across major bug bounty platforms, less than 25\% of submissions are valid, creating tremendous inefficiency in the verification efforts. Organizations can significantly reduce this inefficiency by prioritizing submissions from researchers who also reported artificial bugs. In other words, artificial bugs can serve as a badge for quality submissions.

Second, artificial bugs can be used to gauge participation in bug bounty programs. Organizations receive submissions of vulnerabilities but often cannot determine the number of researchers currently engaging in their bug bounty program. To remedy this uncertainty, artificial bugs can be used as entry checks or milestones in determining the number of active researchers, as well as the extent to which they have probed the system.

Third, artificial bugs can be used to renew interest in the bug bounty program. An empirical analysis of the HackerOne platform shows that bug bounty programs receive less engagement\textemdash as proxied by the number of submissions\textemdash over time, since researchers are attracted to newly published programs \citep{sridhar2021}.  \citet{maillart2017} show that this front-loading effect results in the probability of finding bugs decaying at a rate $1/t^{0.4}$, where $t$ is the time since the program was launched. Inserting artificial bugs may help enhance interest in an otherwise unattractive program.

\section{Conclusion} \label{sec:conclusion}

We have developed a simple device to improve the effectiveness of bug bounty systems.  The additional benefits outlined in the last section strengthen the proposal and establish promising directions for future research into modeling, utilizing, and implementing the insertion of artificial bugs into a bug bounty system.


\clearpage

\singlespacing

\bibliographystyle{apalike}
\bibliography{bug}


\doublespacing

\appendix


\section{Proofs} \label{sec:proofs}

\begin{proof}[Proof of \Cref{lemma:equilibrium}]
First, we show that every equilibrium strategy is a threshold strategy. Let $\sigma_i: [\underline{c},\overline{c}] \rightarrow \{0,1\}$ denote agent $i$'s strategy and let $\boldsigma_{-i}$ denote the strategy profile of agents other than $i$. Given a strategy profile, the payoff of agent $i$ is
\begin{equation*}
u_i(\sigma_i,\boldsigma_{-i},c_i) = \sigma_i\left( \sum_{l} v^l\mu^l p(\boldsigma_{-i};q^l) + \sum_{k} v_a^k p(\boldsigma_{-i};q_a^k) - c_i\right),\end{equation*}
where $p(\boldsigma_{-i};q)$ is the probability that agent $i$ wins the prize corresponding to the bug with complexity $q$, given the strategy $\boldsigma_{-i}$ of others.

Let $\sigma_i^*$ be an equilibrium strategy for agent $i$. We prove by contradiction that $\sigma_i^*$ is non-increasing. Suppose that there exists a pair of costs $c < c'$ such that $\sigma_i^*(c)=0$ and $\sigma_i^*(c')=1$. Given the private cost vector $\boldc =(c_1, \dots, c_n)$,
\begin{equation*}
\begin{aligned}
    0 & = \mathbb{E}[u_i(0, \sigma^*_{-i}(c_{-i}), c_i)|c_i=c] \geq \mathbb{E}[u_i(1, \sigma^*_{-i}(c_{-i}, c_i)|c_i=c] \\
    & >  \mathbb{E}[u_i(1, \sigma^*_{-i}(c_{-i}), c_i)|c_i=c'] \geq \mathbb{E}[u_i(0, \sigma^*_{-i}(c_{-i}, c_i)|c_i=c'] = 0
\end{aligned}
\end{equation*}
where the first and the last inequalities follow from the definition of the equilibrium. The strict inequality, which follows from the fact that $u_i$ is strictly decreasing in $c$ leads to a contradiction. Thus, any equilibrium strategy is a threshold strategy.

Furthermore, since we consider symmetric equilibrium, all agents use the same threshold strategy denoted by $c^*$. In equilibrium, the agent with search cost at the threshold must be indifferent between searching and not searching. Therefore, if the equilibrium threshold lies in the interior of $[\underline{c}, \overline{c}]$, it must hold that
\begin{equation*}
    c^* = \sum_l v^l q^l \Phi(c^*; q^l) + \sum_k v_a^k \Phi(c^*; q_a^k),
\end{equation*}
where $\Phi(\hat{c};q)$ is the expectation of $p(\boldsigma_{-i}(\boldc_{-i});q)$, given the cost distribution of the other agents when they all use the threshold.

To conclude, note that $\Phi(\hat{c};q)$ is decreasing in  $\hat{c}$ for every $q \in (0, 1]$ because a higher threshold adopted by the other agents increases their probability to search and find the bug, which, in turn, must lower agent $i$'s probability of being rewarded for that particular bug. It follows that $\Psi$ is also decreasing in $\hat{c}$. Thus, if $\Psi(\underline{c})\leq \underline{c}$, every agent can only expect to incur a loss by participating in the search regardless of his/her skills, and hence $c^*=\underline{c}$. If $\Psi(\overline{c})\geq \overline{c}$, every agent can only profit by participating, and therefore $c^*=\overline{c}$. Otherwise, we have that $\Psi(\underline{c}) < \underline{c}$ and $\overline{c}<\Psi(\overline{c})$. By continuity and monotonicity of $\Psi$, there exists a unique fixed point. This completes the proof.

\end{proof}


\begin{proof}[Proof of \Cref{lemma:onebug}]
    Suppose that $(\boldv^*, v_a^*, q_a^*)$ solves the designer's problem \eqref{eq:designer_problem} and assume w.l.o.g. that $q_a^* \neq 0$ (else there is nothing to prove). Recall that the equilibrium threshold is given by
    \begin{equation}\label{eq:threshold_proof_lem2}
        c^* = \sum_{l} v^{l*}\mu^l \Phi(c^*;q^l) + \sum_{k} v_a^{k*} \Phi(c^*;q_a^{k*}).
    \end{equation}
    Introduce $\tilde{k} \equiv \argmax_{k} \{q_a^{k*}\}$ and choose the new complexities $\boldq_a^{**}=(q_a^{1**}, 0, \dots, 0)$ with $q_a^{1**}=q_a^{\tilde{k}*}$. Moreover let $\boldv_a^{**}=(v_a^{1**}, 0, \dots, 0)$ where the first prize is given by
    \begin{equation*}
        v_a^{1**}=v_a^{\tilde{k}*}+\frac{\sum_{k\neq \tilde{k}} v_a^{k*} P(c^*; q_a^{k*})}{P(c^*; q_a^{1**})}.
    \end{equation*}
    Observe that with this choice, the second sum in equation \eqref{eq:threshold_proof_lem2} does not change, i.e.
    \begin{equation}\label{eq:stable_sum_artbug}
         \sum_{k} v_a^{*k} \Phi(c^*;q_a^{*k}) =  \sum_{k} v_a^{**k} \Phi(c^*;q_a^{**k}) = v_a^{1**} \Phi(c^*;q_a^{1**}).
    \end{equation}
    Thus if we take $\boldv^{**}=\boldv^{*}$ we obtain the same equilibrium threshold and therefore the designer receives the same utility because of equation \eqref{eq:stable_sum_artbug}. Note that the boundary conditions of \eqref{eq:designer_problem} are satisfied since $P(c^*; q)$ is increasing in $q$. Hence, it is sufficient to introduce only one artificial bug.
\end{proof}


\begin{proof}[Proof of \Cref{lemma:keyrelation}]
Recall that $\Phi(\hat{c}; q)$ is the probability that an agent wins the prize for the bug with complexity $q$ when participating. Since agents find a given bug independently of other agents, we can write
\begin{align*}
    \Phi(\hat{c};q) & = q \sum_{k=0}^{n-1} \left\{ {n \choose k} F(\hat{c})^k(1-F(\hat{c}))^{n-1-k} \left [ \sum_{t=0}^k {k \choose t} \frac{q^t(1-q)^{k-t}}{t+1} \right] \right\} \\
    & = \frac{1-(1-q F(\hat{c}))^n}{n F(c)} = \frac{P(\hat{c};q)}{n F(\hat{c})},
\end{align*}
where we used a generalization of the binomial theorem twice. For details, see the proof of Proposition 3 and Fact 1 in \citet{gersbach2023}.
\end{proof}


\begin{proof}[Proof of \Cref{lemma:achievablec}]
    Observe that the set of achievable equilibrium thresholds $C(\overline{v})$
    has to be an interval since $\sum_{l} v^l\mu^l \Phi(\hat{c};q^l) + v_a \Phi(\hat{c};q_a)$ is continuous. The smallest possible value for $\hat{c}$ is $0$, which can be achieved by choosing $v_a=0$, $q_a=0$ and $v^l=0 \quad \forall l$. Moreover for any $l$ we know from \Cref{lemma:keyrelation} that
    \begin{equation*}
        \Phi(\hat{c}; q^l) = \frac{P(\hat{c}; q^l)}{n F(\hat{c})} = \frac{1-(1-q^lF(\hat{c}))^n}{n F(\hat{c})}\leq \frac{1-(1-F(\hat{c}))^n}{n F(\hat{c})} = \Phi(\hat{c}; 1).
    \end{equation*}
    This, together with the fact that $\Phi(\hat{c}, 1)$ is decreasing in $\hat{c}$ (see Proof of \Cref{lemma:keyrelation}), tells us that the highest $\hat{c} \in C(\overline{v})$ is achieved by allocating all of the reward $\overline{v}$ to the finding of an artificial bug with complexity $q_a=1$, i.e. $\max C(\overline{v})=c_a(\overline{v})$ with $c_a(\overline{v})$ the fixed point of $\overline{v} \Phi(\hat{c}, 1)$. Thus $C(\overline{v})=[0, c_a(\overline{v})]$.
\end{proof}


\begin{proof}[Proof of \Cref{thm:opt}]
    Recall that we can write the designer's problem as
    \begin{equation*}
        \max_{\hat{c} \in [0,c_a(\overline{v})]} \; \sum_{l} w^l\mu^l P(\hat{c};q^l) - n F(\hat{c})\hat{c}.
    \end{equation*}
    Taking the derivative with respect to $\hat{c}$ leads to the first-order condition
    \begin{equation*}
    \begin{aligned}
        \frac{d}{d\hat{c}} \left(\sum_l w^l\mu^l P(\hat{c}; q^l) - n F(\hat{c}) \hat{c} \right) &= \sum_l  \frac{d}{d\hat{c}} \left(w^l \mu^l (1-(1-q^l F(\hat{c}))^n) - n F(\hat{c}) \hat{c} \right) \\
        &= \sum_l w^l \mu^l n(1-q^l F(\hat{c}))^{n-1})q^l f(\hat{c}) - n f(\hat{c}) \hat{c} - n F(\hat{c}) \\ &= 0,
    \end{aligned}
    \end{equation*}
    where we used that $P(\hat{c}; q^l)=1-(1-q^lF(\hat{c}))^n$. By rearranging, we see that
    \begin{equation*}
         \hat{c}= \sum_l w^l \mu^l q^l (1-q^lF(\hat{c}))^{n-1} - \frac{F(\hat{c})}{f(\hat{c})} = \Omega(\hat{c}),
    \end{equation*}
    i.e. $\hat{c}$ is a fixed point of $\Omega$. Note that $(1-q^l F(\hat{c}))^{n-1}$ is decreasing in $\hat{c}$ and recall that we assumed $F/f$ is non-decreasing, and therefore $\Omega({\hat{c}})$ is decreasing in $\hat{c}$. Hence, $\Omega(\hat{c})$ has a unique fixed point $\tilde{c}$. 
    
    This fixed point has to be a maximum since the derivative is $\frac{d W(\hat{c})}{d \hat{c}}= n f(\hat{c}) ( \Omega(\hat{c}) - \hat{c})$, and therefore $\frac{d W}{d \hat{c}}(\hat{c})>0$ for $\hat{c}<\tilde{c}$ and $\frac{d W}{d\hat{c}} (\hat{c}) <0$ for $\tilde{c}<\hat{c}$. By \Cref{lemma:achievablec}, we know that the set of achievable thresholds is $C(\overline{v}) = [0,c_a(\overline{v})]$ and since the utility is increasing on $[0, \tilde{c}]$, the optimal achievable equilibrium threshold is $\hat{c}^*=\min\{\tilde{c}, c_a(\overline{v})\}$. Using \Cref{lemma:equilibrium}, we can conclude that any prizes $v^*, v_a^*$ and complexity $q_a^*$ that solve $\hat{c}^* = \sum_{l}v^l \mu^l \Phi(\hat{c}^*;q^l) + v_a \Phi(\hat{c}^*;q_a)$ are optimal.
\end{proof}


\begin{proof}[Proof of \Cref{thm:artificialbug}]
    This theorem follows from the observation that not inserting an artificial bug reduces the set of achievable equilibrium thresholds to $C_0(\overline{v})=[0, c_0(\overline{v})] \subseteq C(\overline{v})$. Since by \Cref{thm:opt}, the optimal equilibrium threshold is given by $\hat{c}^*=\min\{\tilde{c}, c_a(\overline{v})\}$ and $c_0(\overline{v}) \leq c_a(\overline{v})$, this affects the optimal solution if and only if $\tilde{c} > c_0(\overline{v})$.
\end{proof}


\begin{proof}[Proof of \Cref{lemma:limits}]
    \begin{enumerate}
        \item[(i)] 
            Recall from \Cref{lemma:equilibrium} that for any bug with complexity $q$ we have that $P_n(\hat{c}; q)=nF(\hat{c}) \Phi_n(\hat{c}; q)$. Thus the equilibrium condition can be written as
            \begin{equation}\label{eq:equilibrium_rewritten}
                c_n n F(c_n)= \sum_l v^l \mu^l P_n(c_n; q^l) + v_a P_n(c_n; q_a).
            \end{equation}
            Assume by contradiction that $\lim_{n \rightarrow \infty} c_n $ is not equal to $\underline{c}$. By using a generalized binomial theorem, one can prove that $\Psi_n(c)$ is decreasing in $n$. Therefore, $c_n$ has to be decreasing in $n$. Hence, there has to exist some value $c' > \underline{c}$ such that $c_n > c'$ for any $n \in \mathbb{N}$. In this case, $c_n n F(c_n) > c' n F(c')$. This holds because $F$ is strictly increasing. Note that $F(c')>0$, since $c'>\underline{c}$. Therefore, $\lim_{n\rightarrow \infty}{c_n n F(c_n)}\geq \lim_{n\rightarrow \infty}c'nF(c')=\infty$, which cannot be equal to \[\lim_{n \rightarrow \infty}\sum_l v^l \mu^l P_n(c_n; q^l) + v_a P_n(c_n; q^a) \leq \sum_l v^l + v_a,\] i.e., we obtain a contradiction.
        \item[(ii)]
            For the second part, we assume that $n F(c_n)$ converges as $n \rightarrow \infty$ and denote the limit by $\tilde{\kappa}$.\footnote{It can be proven that $n F(c_n)$ indeed converges by using Taylor series expansions (see Proposition 9 in \cite{gersbach2023}).} Next define $\kappa^* \in (0, \infty)$ as the unique fixed point of $\Psi_{\infty}(\hat{\kappa})=\sum_l v^l \mu^l \frac{1-e^{-q^l \hat{\kappa}}}{\underline{c}} + v_a \frac{1 - e^{-q_a \hat{\kappa}}}{\underline{c}}$. The fixed point exists since $\Psi_{\infty}(0)=0$ and $\Psi_{\infty}(\hat{\kappa})$ is bounded by $\sum_l v^l + v_a$. Uniqueness follows from the fact that $\Psi_{\infty}(\hat{\kappa})$ is strictly increasing in $\hat{\kappa}$. Thus, we only need to prove that $\kappa^*=\tilde{\kappa}$. Note that for any complexity $q \in [0, 1]$ it holds that
            \begin{equation}\label{eq:limit_p}
                \lim_{n \rightarrow \infty} (1-qF(c_n))^n = \lim_{n \rightarrow \infty} \left[ \left ( 1 - \frac{q}{1/F(c_n)} \right )^{\frac{1}{F(c_n)}} \right ]^{nF(c_n)} = e^{-q \tilde{\kappa}},
            \end{equation}
            by the fact that $F(c_n) \rightarrow F(\underline{c})=0$, the definition of $e^x$, and the continuity of the function. Using this observation together with equation (\ref{eq:equilibrium_rewritten}) from part (i) leads to
            \begin{equation*}
                \begin{aligned}
                    \underline{c} \tilde{\kappa} =& \lim_{n \rightarrow \infty } c_n n F(c_n) = \lim_{n \rightarrow \infty} \sum_l v^l \mu^l  P_n(c_n; q^l) + v_a  P_n(c_n; q_a) \\
                    =& \sum_l v^l \mu^l \lim_{n \rightarrow \infty} \left(1 - (1 - q^l F(c_n))^n \right) + v_a \lim_{n \rightarrow \infty} \left( 1 - (1 - q_a F(c_n))^n \right) \\
                    =& \sum_l v^l \mu^l (1- e^{-q^l \tilde{\kappa}}) + v_a (1- e^{-q_a \tilde{\kappa}}).
                \end{aligned}
            \end{equation*}
            In particular, $\tilde{\kappa}$ is a fixed point of $\Psi_{\infty}(\hat{\kappa})$ and by uniqueness we can conclude that $\tilde{\kappa}=\kappa^*$.
        \item[(iii)]
            The proof of $\lim_{n \rightarrow \infty} P_n(q) = 1 - e^{-q \kappa^*}$ follows directly from equation (\ref{eq:limit_p}).
    \end{enumerate}
\end{proof}


\begin{proof}[Proof of \Cref{cor:public_utility}]
    The corollary follows directly from \Cref{lemma:limits}.
\end{proof}


\begin{proof}[Proof of \Cref{lemma:achievable}]
    Observe that 
    \begin{equation*}
        \Psi_\infty(\hat{\kappa}; \boldv, v_a, q_a)=\sum_l v^l \mu^l \frac{1-e^{-q^l \hat{\kappa}}}{\underline{c}} + v_a \frac{1-e^{-q_a \hat{\kappa}}}{\underline{c}}
    \end{equation*}
    is increasing in $\hat{\kappa}$ and $q_a$. Thus, the lemma can be proven analogously to \Cref{lemma:achievablec}.
\end{proof}


\begin{proof}[Proof of \Cref{thm:public_opt}]
    As in the proof of \Cref{thm:opt}, we use the first-order condition
    \begin{equation*}
        \frac{d}{d\hat{\kappa}} \left( \sum_l w^l \mu^l  \left(1- e^{-q^l \hat{\kappa}} \right)- \hat{\kappa} \underline{c} \right) = \sum_l w^l \mu^l q^l e^{-q^l \hat{\kappa}} - \underline{c} = \Omega_\infty(\hat{\kappa}).
    \end{equation*}
    By assumption, we have $\Omega_\infty(0)\geq 0$ and $\Omega_\infty(\hat{\kappa}) \rightarrow - \underline{c}$ as $\hat{\kappa} \rightarrow \infty$, we obtain an unique $\tilde{\kappa}$. The rest of the proof uses the same arguments as the proof of \Cref{thm:opt}.
\end{proof}


The proof of \Cref{thm:limit_set} requires the following lemma.
    
\begin{lemma}\label{lemma:limmit_kappa}
        It holds that $n F(\hat{c}_n^*) \rightarrow \hat{\kappa}^*$ as $n \rightarrow \infty$.
\end{lemma}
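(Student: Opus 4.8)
The plan is to exploit the explicit minimum structure of both optimal thresholds. By \Cref{thm:opt} the optimal private threshold is $\hat{c}_n^* = \min\{\tilde{c}_n, c_a(\overline{v})\}$, where I write $\tilde{c}_n$ and $c_a(\overline{v})$ to stress that both depend on $n$ (through the exponent $n-1$ in $\Omega$ and through $\Phi(\cdot;1)$, respectively), while by \Cref{thm:public_opt} the target is $\hat{\kappa}^* = \min\{\tilde{\kappa}, \kappa_a(\overline{v})\}$. Since $F$ is strictly increasing and $n>0$, one has $nF(\min\{a,b\}) = \min\{nF(a), nF(b)\}$, so it suffices to prove the two separate limits $nF(\tilde{c}_n) \to \tilde{\kappa}$ and $nF(c_a(\overline{v})) \to \kappa_a(\overline{v})$; continuity of $\min$ then delivers $nF(\hat{c}_n^*) \to \min\{\tilde{\kappa}, \kappa_a(\overline{v})\} = \hat{\kappa}^*$.

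The second limit is immediate from \Cref{lemma:limits}. By \Cref{lemma:achievablec}, $c_a(\overline{v})$ is the fixed point of $\overline{v}\,\Phi(\hat{c};1)$, which is exactly the equilibrium condition $\hat{c} = \Psi(\hat{c};\boldv,v_a,q_a)$ for the fixed configuration $\boldv=0$, $v_a=\overline{v}$, $q_a=1$. Applying \Cref{lemma:limits}(ii) to this configuration gives $nF(c_a(\overline{v})) \to \kappa^*$, where $\kappa^*$ is the fixed point of $\Psi_\infty(\hat{\kappa};0,\overline{v},1) = \overline{v}\,(1-e^{-\hat{\kappa}})/\underline{c}$; this is precisely $\kappa_a(\overline{v})$ by \Cref{lemma:achievable}.

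The hard part is the first limit, $nF(\tilde{c}_n) \to \tilde{\kappa}$, because $\tilde{c}_n$ is defined as the fixed point of $\Omega$ rather than as an equilibrium threshold for a fixed prize profile, so \Cref{lemma:limits} does not apply directly and I must work with the identity $\tilde{c}_n = \sum_l w^l\mu^l q^l (1-q^lF(\tilde{c}_n))^{n-1} - F(\tilde{c}_n)/f(\tilde{c}_n)$ head-on. First I would show $\tilde{c}_n \to \underline{c}$: if some subsequence stayed above $c' > \underline{c}$, then $(1-q^lF(\tilde{c}_n))^{n-1} \le (1-q^lF(c'))^{n-1} \to 0$ while $F/f \ge 0$, forcing the right-hand side below $c'$ for large $n$ and contradicting the fixed-point identity (here $\underline{c}>0$ is used). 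Next I would establish $F(\tilde{c}_n)/f(\tilde{c}_n) \to 0$: since $F/f$ is non-decreasing and $F(\underline{c})=0$, a positive right-limit $r$ of $F/f$ at $\underline{c}$ would give $\int_{\underline{c}}^{c_0} f/F \le (c_0-\underline{c})/r < \infty$, contradicting $\int_{\underline{c}}^{c_0} f/F = \ln F(c_0) - \ln F(\underline{c}) = +\infty$; hence this right-limit is $0$.

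With these two facts, the identity reduces to $\sum_l w^l\mu^l q^l (1-q^lF(\tilde{c}_n))^{n-1} = \tilde{c}_n + F(\tilde{c}_n)/f(\tilde{c}_n) \to \underline{c}$. I would then set $\kappa_n \equiv nF(\tilde{c}_n)$ and show it is bounded: if $\kappa_n \to \infty$ along a subsequence the left-hand side tends to $0 \ne \underline{c}$, a contradiction. Being bounded, $(\kappa_n)$ has convergent subsequences; for any subsequential limit $\kappa$, the standard estimate $(1-q^lF(\tilde{c}_n))^{n-1} \to e^{-q^l\kappa}$ (exactly as in \eqref{eq:limit_p}, using $F(\tilde{c}_n)\to 0$ and $(n-1)F(\tilde{c}_n)\to\kappa$) yields $\sum_l w^l\mu^l q^l e^{-q^l\kappa} = \underline{c}$, i.e.\ $\Omega_\infty(\kappa)=0$. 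Since $\Omega_\infty$ is strictly decreasing with unique root $\tilde{\kappa}$, every subsequential limit equals $\tilde{\kappa}$, so $\kappa_n \to \tilde{\kappa}$. The main obstacle is precisely this step---controlling the $n$-dependent fixed point $\tilde{c}_n$ directly and passing to the limit inside the binomial-type term---whereas the monotonicity bookkeeping and the second limit are routine.
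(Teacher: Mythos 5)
Your proof is correct, but it follows a genuinely different and more self-contained route than the paper's. The paper's proof has two steps: it first shows $\hat{c}_n^* \rightarrow \underline{c}$ by noting that $\Omega$ converges pointwise to $-F/f$ away from $\underline{c}$, and then outsources the convergence of $nF(\hat{c}_n^*)$ entirely to the Taylor-expansion argument of Proposition~9 in \citet{gersbach2023}, applied to the equilibrium condition with the ($n$-dependent) optimal prizes substituted in. You instead exploit the structure $\hat{c}_n^* = \min\{\tilde{c}_n, c_a(\overline{v})\}$ from \Cref{thm:opt}: the budget-constrained branch reduces to \Cref{lemma:limits}(ii) applied to the fixed configuration $(\boldv, v_a, q_a) = (0, \overline{v}, 1)$, and the unconstrained branch is handled by working directly with the first-order condition $\tilde{c}_n = \Omega(\tilde{c}_n)$ rather than with any equilibrium condition, via a boundedness-plus-subsequential-limit argument. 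What your approach buys is twofold: it avoids having to track how the optimal prizes vary with $n$ (the FOC involves only the primitives $w^l, \mu^l, q^l, F, f$), and it replaces the external citation with a complete argument, including a clean proof that the right-limit of $F/f$ at $\underline{c}$ must be zero (via the divergence of $\int f/F = d\ln F$), a point the paper leaves implicit. What the paper's route buys is brevity and uniformity with the proof of \Cref{lemma:limits}, since the same Proposition~9 machinery is reused. Your version would stand on its own as a replacement proof.
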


\begin{proof}[Proof of \Cref{lemma:limmit_kappa}]
        Note that $\Omega(c)$ approaches $-F(c)/f(c)$ for $c> \underline{c}$ as $n$ goes to infinity, while $\Omega(0)=\sum_l w^l q^l \mu^l$ for all $n$. Hence, $\hat{c}^*_n \rightarrow \underline{c}$ as $n \rightarrow \infty$.
        
        From here, we can apply the same arguments as in the proof of Proposition 9 in \cite{gersbach2023}  by replacing $V \frac{1-(1-qF(c_n))^n}{c_n}$ with
        $\sum_l v^l \mu^l \frac{1-(1-q^l F(\hat{c}_n^*))^n}{\hat{c}_n^*} + v_a \frac{1-(1-q_a F(\hat{c}_n^*))^n}{\hat{c}_n^*}$ and $V/\underline{c}$ with $\overline{v}/\underline{c}$. In step 3, the argument still holds, due to the fact that the sum of strictly monotone and continuous functions is also strictly monotone and continuous.
    \end{proof}\\
    
\begin{proof}[Proof of \Cref{thm:limit_set}]
For the proof of the theorem, we remove first the budget constraint for the prize for the first bug. Later, the constraint is reintroduced by taking the intersection of $M_n$, respectively $M_\infty$, with a closed subset of $\mathbb{R}^{L+2}$, and therefore, the convergence will not be affected. Take an arbitrary $(\boldv^*_n, v_{a,n}^*, q_{a,n}^*) \in M_n$. Define $(\boldv^*, v_a^*, q_a^*) \in M_{\infty}$ by setting $v_a=v_{a,n}$, $q_a^*=q_{a,n}^*$, $(v^2, \dots, v^L)=(v^2_n, \dots , v^L_n)$ and
    \begin{equation*}
        v^1=\frac{\underline{c}}{\mu^1(1-e^{-q^1\hat{\kappa}^*})}\left(\hat{\kappa}^* \underline{c} - \sum_{l \neq 1} v^l \mu^l \left(1- e^{-q^l \hat{\kappa}^*}\right) - v_a \left (1- e^{-q_a \hat{\kappa}^*} \right) \right).
    \end{equation*}
    Recall that $v^1_n$ of the public program is given by
    \begin{equation*}
        v^1_n= \frac{\hat{c}_n^*}{\mu^1P(\hat{c}_n^*; q_1)}\left(nF(\hat{c}_n^*)\hat{c}_n^* - \sum_{l \neq 1} v^l \mu^l P(\hat{c}_n^*;q^l) - v_a P(\hat{c}_n^*; q_a) \right)
    \end{equation*}
    and thus, by taking the limit for every summand separately and \Cref{lemma:limmit_kappa}, we obtain
    \begin{equation*}
        \lim_{n \rightarrow \infty} v^1-v_n^1 = \frac{\underline{c}}{\mu^1(1-e^{-q^1 \hat{\kappa}^*})}\lim_{n \rightarrow \infty} v_a \left( (1-e^{-q_a \hat{\kappa}^*})- P(\hat{c}_n^*;q_a) \right).
    \end{equation*}

    Since the sequence $e^{-nx}-(1-x)^n$ converges uniformly on bounded intervals and $q_a \in [0,1]$, the right-hand side converges uniformly, and therefore
    \begin{equation*}
        \sup_{m_m \in M_n} \inf_{m \in M_\infty} \|m_n -m\|_2 \longrightarrow 0, \quad \text{ as } n \rightarrow \infty. 
    \end{equation*}
    In the same way, one can prove that
    \begin{equation*}
        \sup_{m \in M_\infty} \inf_{m_n \in M_n} \|m_n -m\|_2 \longrightarrow 0, \quad \text{ as } n \rightarrow \infty, 
    \end{equation*}
    and therefore, $d(M_n, M_\infty) \rightarrow 0$ in the Hausdorff distance.
\end{proof}


\begin{proof}[Proof of \Cref{thm:public_artificial_bug}]
    The proof of the first equivalence relation is analogous to the proof of \Cref{thm:artificialbug}. For the second equivalence, note that $\kappa^l(\overline{v})=\lim_{n \rightarrow \infty} n F(c_n^l(\overline{v}))$ for all $l \in \{1, \dots, L\}$  by \Cref{lemma:limits}. Assume that $\tilde{\kappa} > \kappa_0(\overline{v})$, which is equivalent to
    \begin{equation*}
        \lim_{n \rightarrow \infty} \min\{n F(c_n), n F(c_a(\overline{v}))\} > \lim_{n \rightarrow \infty} \max_l \{n F(c_n^l(\overline{v}))\}.
    \end{equation*}
    This is the case exactly when there exists $N$, such that for all $n \geq N$, we have 
    \begin{equation*}
    \min\{n F(c_n), n F(c_a(\overline{v}))\} > \max_l n F(c_n^l(\overline{v})).
    \end{equation*} 
    But is equivalent to $c_{0,n}(\overline{v}) > \tilde{c}_n$ for $n \geq N$, since $F(c)$ is strictly increasing in $c$ as a cumulative distribution function.
\end{proof}


\end{document}